\let\csname equation*\endcsname\relax
\let\csname endequation*\endcsname\relax
\renewcommand{\e}{\mathrm{e}}
\renewcommand{\i}{\mathrm{i}}
\newcommand{\w}{\mathrm{w}}
\renewcommand{\d}{\mathrm{d}}
\newcommand{\ds}{\displaystyle}
\newcommand{\ord}{\mathrm{O}\left(}
\newcommand{\dsfrac}{\ds\frac}
\newcommand{\I}{\mathrm{I}}
\renewcommand{\(}{\left(}
\renewcommand{\)}{\right)}
\newcommand{\ol}{\overline}
\newtheorem*{prop*}{Proposition}
\newtheorem{teor}{Theorem}[section]
\newtheorem{lem}{Lemma}[section]
\newtheorem{com}{Comment}[section]
\newtheorem*{col*}{Collorary}
\renewcommand{\e}{\mathrm{e}}
\renewcommand{\i}{\mathrm{i}}
\renewcommand{\d}{\mathrm{d}}
\providecommand{\B}{\mathbf}
\renewcommand{\(}{\left(}
\renewcommand{\)}{\right)}
\begin{document}

\title[]{Modulated elliptic wave and asymptotic solitons in a shock problem to the modified Korteweg-de Vries equation}

\author{Vladimir Kotlyarov$^{\scriptsize 1\normalsize}$ and Alexander Minakov$^{\scriptsize 1,2,3\normalsize}$}
\address{$^{\scriptsize 1\normalsize}$ Mathematical Division, B. Verkin Institute for
Low Temperature Physics\\
47 Lenin Avenue, Kharkiv, 61103, Ukraine\\\vskip3mm
$^{\scriptsize 2\normalsize}$ Doppler Institute, Czech Technical University\\
Brehova 7, 11519 Prague, Czech Republic\\\vskip3mm $^{\scriptsize 3\normalsize}$Department of
Physics, Faculty of Nuclear Science and Physical
Engineering, Czech Technical University\\
Pohranicni 1288/1, Decin, Czech Republic\\\vskip3mm E-mail:
kotlyarov@ilt.kharkov.ua \\minakov.ilt@gmail.com
\vskip-1.5cm}

\begin{abstract}
We study the long-time asymptotic behavior of the Cauchy problem for the modified Korteweg - de Vries equation with an initial function of the step type. This function rapidly tends to zero as $x\to+\infty$ and to some positive constant $c$ as $x\to-\infty$.

In 1989 E. Khruslov and V. Kotlyarov have found \cite{KK} that for a large time the solution breaks up into a train of asymptotic solitons located in the domain
$4c^2t-C_N \ln t<x\leq4c^2t$ ($C_N$ is a constant). The number N of these solitons
grows unboundedly as $t\to\infty$. In 2010 V. Kotlyarov and A. Minakov have studied temporary asymptotics of the solution of the Cauchy problem on the whole line \cite{KM} and  have found that in the domain  $-6c^2 t<x<4c^2t$ this solution is described by a modulated elliptic wave.

We considere here the modulated elliptic wave in the domain $4c^2t-C_N \ln t<x<4c^2t$. Our main result shows that the modulated elliptic wave also breaks up into solitons, which are similar to the asymptotic solitons in \cite{KK}, but differ from them in phase. It means that the modulated elliptic wave does not represent the asymptotics of the solution in the domain $4c^2t-C_N \ln t<x<4c^2t$.  The  correct asymptotic behavior of the solution is given by the train of  asymptotic solitons given in \cite{KK}.

However, in the asymptotic regime as $t\to\infty$ in the region
$4c^2t-\dsfrac{N+1/4}{c}\ln t<x<4c^2t-\dsfrac{N-3/4}{c}\ln t$
we can watch precisely a pair of solitons with numbers $N$. One of them is the asymptotic soliton while the other soliton is generated from the elliptic wave. Their phases become closer to each other for a large $N$, i.e. these solitons are also close to each other. This result gives the answer on a very important question about matching of the asymptotic formulas in the mentioned region where the both formulas are well-defined. Thus we have here a new and earlier unknown  mechanism of matching of the asymptotics of the solution in the adjacent regions.
\end{abstract}

\pacs{02.30.Ik, 02.30.Jr, 02.30.Zz}

\noindent{\it Keywords\/}: integrable equations, modulated elliptic wave, asymptotic solitons,
Riemann -- Hilbert problem, step-like initial datum

\section{Introduction}

The history of studying of the Cauchy problem for the modified Korteweg -- de Vries equation
\begin{equation}\label{mkdv}
q_t(x,t)+6q^2(x,t)q_x(x,t)+q_{xxx}(x,t)=0
\end{equation}
with an initial function of the step type
\begin{equation}\label{ic}
q(x,0)=q_0(x)\to\begin{cases}0\qquad {\rm as}\quad
x\to+\infty,\\c\qquad {\rm as }\quad x\to-\infty,\qquad
c>0,\end{cases}
\end{equation}
goes quite long. The first asymptotic results for a large time were obtained for the more famous Korteweg -- de Vries equation. Physicists have begun to understand a qualitative description of the solution beginning from the pioneer work of A.~Gurevich and L.~Pitaevsky \cite{GP} (1973). Later in this direction R.~Bikbaev,
V.~Novokshenov  and others had actively worked and obtained interesting results (sf. \cite{BikN1}-\cite{Bikb6}). In particular, the Cauchy problem (\ref{mkdv}) -- (\ref{ic}) with more general type of initial data was considered by  R.~Bikbaev \cite{Bikb4} in 1992. All these papers were done in the framework of the heuristic Whitham method. From the physical intuition it was understood that $x,t$-plane is divided into three domains. In the left and the right domains the solution tends to constants, one of them equals zero in our case.  In the middle domain it tends to a modulated elliptic wave.

There were not any rigorous mathematical papers on this theme with the exception for papers concerning to the  region of asymptotic solitons. It was proved that near the wave front there exists a domain of the strip type where the so-called asymptotic solitons arise. They are generated by a simple continuous spectrum of the corresponding Lax operator, as opposed to usual solitons generated by a discrete spectrum. For the KdV equation it was done by E.~Khruslov \cite{Kh1}, \cite{Kh2}
(1975, 1976), and for the MKdV it was done by E.~Khruslov and V.~Kotlyarov \cite{KK} (1989). The review of the results in this direction can be found in \cite{KK2}, \cite{KK3} and in the references therein. Thus, besides the three specified  regions,   there is an additional transition region near the leading edge where a train of  asymptotic solitons runs.

On the other hand, the method of the Riemann -- Hilbert problem and the corresponding steepest descent method \cite{DZ93} have been actively developed for more than 20 years. Recently these methods were successfully applied to studying of solutions of the  step type in other regions of $x,t$ half-plane, not only in the soliton domains (\cite{BK07} --\cite{BV},  \cite{EGKT}, \cite{KM} -- \cite{MK10}).

The Cauchy problem (\ref{mkdv})- (\ref{ic}) for the modified Korteweg -- de Vries equation was recently studied by V.~Kotlyarov and A.~Minakov \cite{KM} via the Riemann -- Hilbert approach. (More general initial data with nonzero backgrounds, defined by two different and nonzero constants as $x\to\pm\infty$, were studied  in \cite{M1}, \cite{M2}, \cite{KM2}.) In particular, in the domain $-6c^2 t<x<4c^2t $ the asymptotics of the  Cauchy problem solution is described by a modulated elliptic wave.

\begin{teor} {\rm\cite{KM}}\label{teorKM} In the region $-6c^2 t<x<4c^2t $
the solution of the IBV problem (\ref{mkdv}), (\ref{ic})  takes
form of a modulated elliptic wave
\[q(x,t)=q_{ell}(x,t)+\mathrm{o}\(1\),\quad t\rightarrow\infty,\] where
\begin{equation}{\label{qmod}}q_{ell}(x,t)=\sqrt{c^2-f^2(\xi)}
\dsfrac{\Theta\(\pi\i+\i t
B(f(\xi))+\i\Delta(f(\xi))|\tau(f(\xi))\)}{\Theta\(\i t
B(f(\xi))+\i\Delta(f(\xi))|\tau(f(\xi))\)}, \quad
\xi=\dsfrac{x}{12t}.
\end{equation}
Here $\Theta(z,\tau)$ is the Riemann theta function determined  by its $b$-period $\tau=\tau(d)$, the functions
 $B(d)$, $\tau(d)$, $\Delta(d)$ are explicitly defined via (\ref{Bg})-(\ref{Delta}), and the function $f(\xi)$  is defined implicitly through formulas  (\ref{dmu1}), (\ref{dmu2}).
\end{teor}

It is worth to mention that the point $\i d=\i f(\xi)$ ($0<d<c$)
is an analogue of a branch point of the Whitham zone which lies on the imaginary axis, $\tau$ is the $b$ -- period of the corresponding Riemann surface associated with the function $\sqrt{(k^2+c^2)(k^2+d^2)}$, the function $B(\xi)$ is the $b$ -- period of an Abelian integral of the second kind, and $\Delta(\xi)$  is an integral over the Riemann surface. All these ingredients are defined below in (\ref{dmu1}) -- (\ref{Delta}). It is easy to verify that the Riemann theta function is  ill-conditioned  if $d=f(\xi)$ tends to $c$ or $0$.  Therefore we cannot be sure that $ q_{ell}(x,t)$ gives the right asymptotics in a right semi-neighborhood of the trailing edge $x=-6c^2t$ and in a left semi-neighborhood of the leading edge $x=4c^2t$.

Nevertheless, the explicit formula, which determines $q_{ell}(x,t)$, is correctly defined in the domain $-6c^2t< x<4c^2t$. On the trailing edge one can easily find that $\lim\limits_{x\to-6 c^2t}q_{ell}(x,t)=c$. In the left region $x<-6c^2t$ the main term of the asymptotics of the solution of the IBV problem  is also equal to $c$ \cite{KM}. Hence, the main terms of the asymptotics  match at this point and there is no  need to introduce an additional asymptotic domain in the neighborhood of the trailing edge. Such a need may appear in the next order of an asymptotic expansion. Another situation is  in the neighborhood ($4c^2t-C_N \ln t<x<4c^2t$) of the leading edge where the train of asymptotic solitons runs \cite{KK}. Thus we must study the long-time asymptotic behavior of the modulated elliptic wave $q_{ell}(x,t)$ and to compare the result with the one earlier obtained for $q(x,t)$ in \cite{KK}. To do so we present here our main result.
\begin{teor}{\label{teormain}} Let $N\geq1$ be any integer number, and $\varepsilon>0$ is a fixed small number.
Then as $t\rightarrow\infty$\textbf{\\A)} uniformly for $t$ and
$x$ such that
$$4c^2t-\dsfrac{2N+\frac{1}{2}-\varepsilon}{2c}\ln t\leq
x\leq4c^2t-\dsfrac{\frac{1}{2}+\varepsilon}{2c}\ln t$$ the
following is true:
\begin{equation}{\label{qmodsumofasymptoticsolitons}}q_{ell}(x,t)=\sum_{n=1}^{N}
\dsfrac{2c}{\cosh\(2c(x-4c^2t)+\(2n-\frac{1}{2}\)\ln
t-\alpha_n(x,t)\)}+ \mathrm{O}\(\frac{1}{t}\);
\end{equation}
\textbf{B)} uniformly for $t$ and $x$ such that
$$4c^2t-\dsfrac{1-\varepsilon}{2c}\ln t\leq x<4c^2t$$
the following estimate holds:
\[q_{ell}(x,t)=
\mathrm{O}\(t^{-\frac{1}{2}}\).
\]
Here $\alpha_n(x,t)$ is a well-defined function, which has the
following asymptotic behavior as $t\to\infty:$
\begin{equation}{\label{alpha_nxt}}\alpha_n(x,t)=
-\(2n-\frac{1}{2}\)\ln\dsfrac{\ln\frac{1}{v}}{vt}-\dsfrac{8c^3tv}{\ln
\frac{1}{v}}-\(6n-\dsfrac{7}{2}\)\ln2-2\ln|h^*|+\mathrm{O}\(\dsfrac{\ln^2\ln\frac{1}{v}}{\ln
v}\),\end{equation} where
$v=1-\dsfrac{x}{4c^2t}$, and $h^*$ is a constant determined by  the initial data and defined by the associated transmission coefficient by formula (\ref{asingularity}).
For any $\gamma, \delta$ such that $0<\delta<\gamma$, this
function $\alpha_n(x,t)$ is bounded in the interval
$4c^2t-\gamma\ln t<x<4c^2t-\delta\ln t$.
\end{teor}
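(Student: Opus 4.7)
The plan is to exploit the degeneration of the elliptic structure in (\ref{qmod}) as the similarity variable $\xi\to c^2/3$, equivalently as $f(\xi)\uparrow c$. At this boundary of the Whitham zone the underlying genus-one Riemann surface pinches, the $b$-period $\tau(f)$ diverges to $i\infty$, and the nome $q:=e^{i\pi\tau(f)}$ becomes exponentially small, so the theta function admits a rapidly convergent Fourier series $\Theta(z|\tau)=\sum_{n\in\mathbb Z}q^{n^2}e^{nz}$. In the window of interest $v:=1-x/(4c^2t)$ is of order $\log t/t$, and $q$ should turn out to be a small positive power of $v$, making this expansion the natural asymptotic tool.

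First I would derive the precise local expansions of $f(\xi)$, $\tau(f)$, $B_g(f)$ and $\Delta(f)$ as $f\uparrow c$. Inverting the implicit equation for $f$ furnished by Lemma~\ref{lemdmu} gives $v$ as a function of $c-f$ with leading behaviour $v\log\frac{1}{v}\sim\text{const}\cdot(c-f)$, while the standard degeneration formulas for the $a$- and $b$-periods of the curve $y^2=(k^2-c^2)(k^2-f^2)$ produce $\tau(f)=\frac{i}{\pi}\log\frac{1}{c-f}+\mathrm{O}(1)$ together with explicit expansions for $B_g(f)$ and $\Delta(f)$. Setting $x=4c^2t-s$ with $s=\mathrm{O}(\log t)$, these substitutions yield an expression of the form
\[
itB_g(f(\xi))+i\Delta(f(\xi))=-2cs+\beta(v,t)+\text{(small remainder)},
\]
where $\beta(v,t)$ is the function that will ultimately be identified with the phase $\alpha_n(x,t)$ in (\ref{alpha_nxt}).

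Next I would insert the theta-series expansion into both the numerator and the denominator of (\ref{qmod}); the $n$-th Fourier harmonic then contributes a term of magnitude $q^{n^2}e^{2cns}\sim e^{-(2n-1/2)\log t+2cs}$. In the window of part~\textbf{A} precisely the harmonics $|n|\le N$ survive, all $|n|\ge N+1$ contribute $\mathrm{O}(1/t)$, and the shift by $\pi i$ in the numerator flips the sign of the odd-$n$ terms; pairing the $\pm n$ harmonics and using the elementary identity $1/(e^{\phi}+e^{-\phi})=1/(2\cosh\phi)$ then recombines the surviving part of the theta quotient, together with the prefactor $\sqrt{c^2-f^2(\xi)}$ (which contributes the amplitude $2c$ via $\sqrt{c^2-f^2}\sim 2c\sqrt{1-f/c}$ after cancellations against the amplitudes $q^{n^2}$), into the sum of $N$ solitons of the form $2c/\cosh(\cdot)$ claimed in (\ref{qmodsumofasymptoticsolitons}). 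For part~\textbf{B}, in the subregion $4c^2t-\frac{1-\varepsilon}{2c}\log t\le x<4c^2t$ all Fourier modes other than $n=0$ are uniformly $\mathrm{O}(1)$ so the theta quotient is bounded, while the prefactor $\sqrt{c^2-f^2(\xi)}\sim\sqrt{2cv}$ is of order $t^{-1/2}$.

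The main obstacle will be the precise bookkeeping in the first step. The non-trivial coefficients $(2n-\frac12)\log t$ and $(6n-\frac72)\log 2$ appearing in (\ref{alpha_nxt}) force one to carry several subleading terms in the degeneration of $\tau$, $B_g$ and $\Delta$; moreover the uniform remainder $\mathrm{O}\left(\frac{\log^2\log(1/v)}{\log v}\right)$ must survive throughout the whole logarithmic window rather than only pointwise, which requires simultaneous control of the implicit function $f(\xi)$ and of the convergence of the $q$-series in the two parameters $v$ and $t$.
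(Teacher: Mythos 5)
Your overall architecture coincides with the paper's: degeneration expansions of $f$, $\tau$, $B_g$, $\Delta$ near the leading edge, truncation of the theta series to finitely many harmonics, pairing of harmonics into $1/(2\cosh)$, and careful phase bookkeeping (the paper inverts $v=\eta\log(8\mathrm{e}/\eta)+\mathrm{O}(\eta^2\log^2\eta)$ via a Lambert-type equation to produce exactly the $\log\frac{\log(1/v)}{vt}$ and $\log 2$ terms you anticipate). However, your very first step fails as written. With the normalization actually used in (\ref{qmod}) --- $\tau(d)$ defined by (\ref{tau}) and analyzed in Lemma \ref{h} --- the period does not diverge as $f(\xi)\uparrow c$: it is the dual quantity $\tau^*(d)=4\pi^2/\tau(d)$ that tends to $-\infty$, so $\tau(d)\to 0^-$ and the nome of the theta functions appearing in (\ref{qmod}) tends to $1$. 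The series $\Theta(z|\tau)=\sum_m\exp(\tfrac12\tau m^2+zm)$ then converges ever more slowly in the regime of interest, and your truncation to the harmonics $|n|\le N$ with remainder $\mathrm{O}(1/t)$ is not justified: no finite set of harmonics dominates.

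The missing idea is the Jacobi imaginary transformation (Poisson summation, formula (\ref{PoissonSum})), which is the paper's first move: it rewrites the theta quotient in terms of $\tau^*\to-\infty$, after which your picture (exponentially small nome, finitely many surviving harmonics, cosh pairing, quasi-periodicity reducing everything to one period cell) is exactly what the paper establishes in Lemma \ref{lemmaF}. Note that this transformation also produces Gaussian prefactors $\exp(-z^2/(2\tau))$ whose ratio contributes $\exp(-\tau^*/8+\tau^*(z+1)/4)$; it is the combination $\sqrt{c^2-f^2}\,\mathrm{e}^{-\tau^*/8}=4c(1+\mathrm{O}(\mathrm{e}^{\tau^*/4}))$ --- not $\sqrt{c^2-f^2}$ cancelling against the harmonic amplitudes alone --- that yields the soliton amplitude $2c$, and this factor must be tracked or the normalization comes out wrong (your $\sqrt{c^2-f^2}\sim 2c\sqrt{1-f/c}$ is also off by $\sqrt{2}$). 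Once the modular transformation is inserted, the rest of your plan, including the uniformity concerns you correctly flag, matches the paper's Lemmas \ref{lemmaintervals} and \ref{lemmaarg}.
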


\begin{com} In the intersection domain
$4c^2t-\dsfrac{1-\varepsilon}{2c}\ln t\leq
x\leq4c^2t-\dsfrac{\frac{1}{2}+\varepsilon}{2c}\ln t$
asymptotics \textbf{A)} and \textbf{B)}  match.
\end{com}
\begin{com}
On any curve $x=4c^2t -\gamma \ln t$ we can express
$\alpha_n(x,t)$ as a sum of some constant and a  remainder  which tends
to zero as $t\rightarrow\infty.$ But this constant depends on
$\gamma$, and that is why we can not express $\alpha_n(x,t)$ as a
sum of some constant and a vanishing term in any domain $4c^2t
-\gamma \ln t<x<4c^2t -\delta \ln t.$
\end{com}

Further we compare $q_{ell}(x,t)$
(\ref{qmodsumofasymptoticsolitons}) with the asymptotic representation (see below (\ref{qsumofasymptoticsolitons}) - (\ref{tildealpha}))  for the solution $q(x,t)$ of the IBV problem (\ref{mkdv})-(\ref{ic}) that follows from \cite{KK}:

\begin{teor} \label{teorKK}
Let $N\geq1$ be an integer number.
Then for
$$x>4c^2t-\dsfrac{(2N+1)\ln t}{2c}$$
the solution of the IBV problem (\ref{mkdv})-(\ref{ic}) is presented as follows:
\begin{eqnarray}{\label{qsumofasymptoticsolitons}}
q(x,t)&=q_{as}(x,t)+\mathrm{O}\left(t^{-\frac{1}{2}+\varepsilon}\right),\\
q_{as}(x,t)&=\sum\limits_{n=1}^N\dsfrac{2c}{\cosh\left(2c(x-4c^2t)+\left(2n-\dsfrac{1}{2}\right)
\ln t-\tilde\alpha_n\right)} \label{qass},
\end{eqnarray}
\begin{equation} {\label{tildealpha}}
\tilde\alpha_n=\ln\left[\dsfrac{|h_0|}{4^{2n-1}(2c)^{6n-3}}
\Gamma(n)\Gamma\(n+\dsfrac{1}{2}\)\right], \end{equation}
where $\Gamma(n)$, $\Gamma(n+1/2)$ are the Gamma-functions, and  $h_0$ is a constant determined by  the initial data and defined by the associated transmission coefficient by  formulas (\ref{h0}), (\ref{|h0|}).
\end{teor}

In order to compare the solitons from (\ref{qmodsumofasymptoticsolitons}) with asymptotic solitons from (\ref{qass}), we compare their phases (\ref{alpha_nxt}), (\ref{tildealpha}) on the curve $x_n(t)$ (\ref{x_n(t) definition implicit}) of the peaks of the modulated elliptic wave solitons:
$$
\beta_n(t):=\alpha_n\left(x_n(t), t\right)-\tilde\alpha_n,
$$
which characterizes a deviation of the solitons from each other.
We prove that in the  asymptotic regime, when  $t\to\infty$ and $n\rightarrow\infty$,  these deviation becomes smaller and smaller.
\begin{teor} \label{betan}
The following estimate of the asymptotic smallness takes place:
$$
\beta_n(t)=\mathrm{O}\left(\frac{1}{2n}\right) +\mathrm{O}\left(\frac{\ln^2\ln t}{\ln t}\right) \qquad \mathrm{as}\quad  n,t\to\infty.
$$
\end{teor}
This result gives the answer on an old question that  appeared on the Marchenko's seminar in the late 80-s for the  Korteweg -- de Vries equation about matching of  asymptotic formulas in a neighborhood of the leading wave front. This question  arose  out of works  \cite{BikN1}, \cite{Kh2} and remained open for any nonlinear model. For the modified Korteweg de Vries equation we state here that  the  modulated elliptic wave is unsuitable  in the region $4c^2t-\dsfrac{N+1/4}{c}<x<4c^2t$ with any natural $N$, while the  asymptotic solitons  (\ref{qass})  give the right asymptotic behavior of the solution in this region.  This statement is the second main result of the present paper.
In our mind  these results are also very important for the steepest descent method in the theory of the matrix Riemann--Hilbert problem. They bring a new and earlier unknown  mechanism of matching of asymptotics of the solution in adjacent regions.

The paper is organized as follows.
In section 2 we introduce some notations needed in the sequel.
In section 3 we give the proof of Theorem \ref{teormain}. The proof is based on some quite technical lemmas, which are posted  in Appendices 1 and 2 (section 6 and 7).  In section \ref{Comparing} we deduce Theorem \ref{teorKK} and prove that the deviation  $\beta_n(t)$ is not asymptotically small for any finite $n$. In section \ref{Matching} we prove the estimate of smallness for deviation  $\beta_n(t)$ for large $n$ and $t$ (Theorem \ref{betan}).

\section{Preliminaries}

Let us consider the initial-value prob\-lem (\ref{mkdv}),
(\ref{ic}) with an initial func\-tion which rapidly tends to its
limits, namely
\begin{equation}\label{exp_decreasing}\int\limits_{-\infty}^0|q_0(x)-c|\e^{2|x|l_0}\d
x+\int\limits_{0}^{+\infty}|q_0(x)|\e^{2xl_0}\d
x<\infty.\end{equation} Here $l_0$ satisfies the inequality $l_0>c>0.$
We suppose that there exists the solution of this initial
value-problem and that this solution converges to its limits with
the first moment: for any $t$
\begin{equation}\label{first_moment}\int\limits_{-\infty}^0|x||q(x,t)-c|\d
x+\int\limits_{0}^{+\infty}x|q(x,t)|\d x<\infty.
\end{equation}
 We are interested in the long-time asymptotic behavior of the solution of the Cauchy problem.

Let $a^{-1}(k)$ be the standard transmission coefficient
(\cite{KM}). It has the following properties:

\begin{enumerate}
\item $a(k)$ is analytic in the strip $\left\{k:\quad  0<\Im k<l_0\right\}$
with the cut $[0,\i c]$, can be extended continuously up to the boundary
with the exception of the point $\i c,$ where $a(k)$ may have at
most a root singularity of the order $(k-\i c)^{-1/4};$

\item $a(k)$ satisfies the symmetry condition
\begin{equation}\label{asymmetry}\ol{a(-\ol{k})}=a(k);\end{equation}

\item $\forall k\in(\i c,0] \ \ \ a_{\pm}(k)\neq0,$ \\where
$a_{+}(k)$ and $a_-(k)$ are boundary values of the function $a(k)$
from the right and from the left of the interval $(\i c,0)$,
respectively.

\end{enumerate}

\noindent We suppose the absence of usual solitons generated by the discrete spectrum:
$$\forall
k\in\mathbb{C}_+\backslash[\i c,0] \ \ \ a(k)\neq0.$$
We suppose also that $a(k)$ has a root singularity at the point $\i c$ of the  exactly order
$(k-\i c)^{-1/4}$:
\begin{equation}\label{asingularity}a(k)=\dsfrac{h^*}{2}\sqrt[4]{\dsfrac{2\i c}{k-\i
c}}\(1+\mathrm{O}\(\sqrt{\dsfrac{k-\i c}{\i}}\)\),\ k\to\i
c,\qquad \qquad h^*\in\mathbb{R}\backslash
\left\{0\right\}.\end{equation}

\noindent The fact that $h^*\in\mathbb{R}$ follows from formula
$(\ref{asymmetry}).$

\begin{com} Particularly, $a(k)$ has the such type of
singularity with $h^*=1$ in the case of the "pure" step initial function
$q_0(x)=\widetilde{q}_0(x)$, where $\widetilde{q}_0(x)$ is a
Heaviside -- type function
\begin{equation}\nonumber \widetilde{q}_0(x)=\begin{cases}0,\quad
x> 0,\\c,\quad x<0.\end{cases}
\end{equation}
In this case the inverse of the transmission coefficient
$a(k)=\widetilde{a}(k)$ can be computed explicitly and has the
following form:
$$\widetilde{a}(k)=\dsfrac{1}{2}\(\sqrt[4]{\dsfrac{k-\i c}{k+\i
c}}+\sqrt[4]{\dsfrac{k+\i c}{k-\i c}}\),$$ where the cut is taken
across the segment $[\i c,-\i c]$ and a branch of the root is
taken such that it tends to $1$ as $k\rightarrow\infty.$

Due to the analyticity of $\widetilde{a}(k)$, the asymptotic analysis
in this case can be done in the same manner as in the case of smooth
and fast decreasing  initial functions.
\end{com}

Now we introduce the notations needed in the sequel.

$\bullet$ \textbf{$\Theta$-function $\Theta(z|\tau)$}.\\ We use
$\Theta$-function in the next form:
\\let $\tau<0$ be a fixed negative number. Then for any $z\in\mathbb{C}$
\begin{equation}{\label{Theta}}\Theta(z|\tau)=\sum\limits_{m=-\infty}^{\infty}\exp\left\{\dsfrac{1}{2}\tau m^2
+zm\right\}.\end{equation}
$\Theta$-function has the following property:\\
for any integer $n, l$
\begin{equation}{\label{Thetashift}}
\Theta(z+2\pi\i n+\tau
l|\tau)=\Theta(z|\tau)\exp\left\{-\dsfrac{1}{2} \tau
l^2-zl\right\}.\end{equation}

\begin{com} In the sequel the parameter   $\tau$  is  a function on $\xi$, which is defined via formulas (\ref{tau}), (\ref{dmu1}), (\ref{dmu2}), wherein  $\tau(\xi)$   tends to $0$ as $\xi\rightarrow c^2/3$.\end{com}

$\bullet$ \textbf{Real parameters $d$ and $\mu $}.\\
Function
$d=f(\xi):\left[-\frac{c^2}{2},\frac{c^2}{3}\right]\rightarrow[0,c]$
is defined from the following system:
\begin{equation}{\label{dmu1}}
\int\limits_0^1(\mu^2-\lambda^2d^2)\sqrt{\dsfrac{1-\lambda^2}{c^2-\lambda^2d^2}}\d\lambda=0,
\end{equation}
\begin{equation}{\label{dmu2}}
\dsfrac{c^2}{2}+\xi=\mu^2+\dsfrac{d^2}{2}.
\end{equation}
This system was introduced in \cite{KM}.

\begin{com} Usually in asymptotic problems $\xi=\dsfrac{x}{12t}$ is referred to as the  "slow" variable. That means that the
asymptotic analysis of the solution of the initial value problem is studied for those $\xi$ which lie in any fixed bounded  interval while $t\rightarrow\infty$.
\end{com}

\begin{lem}{\label{lemdmu}}
For $\xi\in\left[\frac{-c^2}{2},\frac{c^2}{3}\right]$  there
exists  one to one, increasing and continuous map
$f:\left[\frac{-c^2}{2},\frac{c^2}{3}\right]\rightarrow[0,c],$
which defines the  unique solution of the system of equations
(\ref{dmu1}), (\ref{dmu2}) by the formulas $d=f(\xi)$ and
$\mu=\mu(f(\xi)) $. The map
$\mu:[0,c]\rightarrow\left[0,\frac{c}{\sqrt{3}}\right]$ is also
continuous and increasing, and the following boundary conditions
are satisfied: $f\(\frac{c^2}{3}\)=c$,
$f\(\frac{-c^2}{2}\)=0$, $\mu(0)=0$,
$\mu(c)=\frac{c}{\sqrt{3}}.$
\end{lem}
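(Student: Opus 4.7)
The plan is to solve~(\ref{dmu1}) for $\mu$ as a function of $d\in[0,c]$, then invert $\xi(d):=\mu^2(d)+\tfrac{d^2}{2}-\tfrac{c^2}{2}$ to obtain $f$. Writing
\[F(\mu;d):=\int_0^1(\mu^2-\lambda^2 d^2)\sqrt{\tfrac{1-\lambda^2}{c^2-\lambda^2 d^2}}\,\d\lambda=\mu^2 I_0(d)-d^2 I_1(d),\qquad I_k(d):=\int_0^1\!\lambda^{2k}\sqrt{\tfrac{1-\lambda^2}{c^2-\lambda^2 d^2}}\,\d\lambda,\]
we see that $F$ is affine and strictly increasing in $\mu^2$, with $F(0;d)\le 0$ (equality only at $d=0$) and $F(d;d)=d^2\int_0^1(1-\lambda^2)\sqrt{(1-\lambda^2)/(c^2-\lambda^2 d^2)}\,\d\lambda>0$ for $d>0$. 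Hence a unique $\mu(d)\in(0,d)$ solves $F(\mu(d);d)=0$ for each $d\in(0,c]$, and we set $\mu(0):=0$. At $d=c$ the simplification $\sqrt{(1-\lambda^2)/(c^2-c^2\lambda^2)}=1/c$ reduces~(\ref{dmu1}) to $\int_0^1(\mu^2-c^2\lambda^2)\,\d\lambda=0$, giving $\mu(c)=c/\sqrt{3}$. Smoothness of $\mu(\cdot)$ on $(0,c)$ follows from the implicit function theorem since $\partial_\mu F=2\mu I_0(d)>0$.

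\smallskip

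The crucial step is monotonicity of $\mu(d)$. Differentiating $F(\mu(d);d)\equiv 0$ with respect to $d$ yields
\[2\mu(d)\,I_0(d)\,\mu'(d)\;=\;2d\,I_1(d)\;-\;d\!\int_0^1(\mu^2-\lambda^2 d^2)\,\frac{\lambda^2\sqrt{1-\lambda^2}}{(c^2-\lambda^2 d^2)^{3/2}}\,\d\lambda.\]
The analytic heart of the argument is a weighted-integral comparison. The factor $\mu^2-\lambda^2 d^2$ changes sign once, from positive to negative, at $\lambda_0=\mu(d)/d$, and~(\ref{dmu1}) asserts that its integral against the weight $W_1(\lambda):=\sqrt{1-\lambda^2}/\sqrt{c^2-\lambda^2 d^2}$ vanishes. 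The weight above equals $W_1$ multiplied by $\lambda^2/(c^2-\lambda^2 d^2)$, which is strictly increasing on $(0,1)$; splitting the integral at $\lambda_0$ and bounding the extra factor by its value at $\lambda_0$ separately on $[0,\lambda_0]$ and $[\lambda_0,1]$ shows the last integral is strictly negative. Therefore the right-hand side is strictly positive, and $\mu'(d)>0$ on $(0,c)$.

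\smallskip

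It follows that $\xi'(d)=2\mu(d)\mu'(d)+d>0$ on $(0,c]$, so $\xi$ is a continuous strictly increasing bijection $[0,c]\to[-c^2/2,c^2/3]$ with $\xi(0)=-c^2/2$ and $\xi(c)=c^2/3$. Its inverse $f:=\xi^{-1}$ then satisfies $f(-c^2/2)=0$, $f(c^2/3)=c$, and is continuous and strictly increasing; the composition $\mu\circ f$ inherits these properties and maps $[-c^2/2,c^2/3]$ onto $[0,c/\sqrt{3}]$ with the required boundary values. The principal obstacle in this plan is the weighted-integral inequality in the monotonicity step; once it is established via the sign-change comparison above, everything else reduces to elementary calculus.
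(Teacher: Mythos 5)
Your proof is correct, and its skeleton coincides with the paper's: both introduce $F(\mu,d)$, solve (\ref{dmu1}) for $\mu=\mu(d)$ (equivalently $\mu^2(d)=d^2I_1(d)/I_0(d)$, the paper's (\ref{mud})), and then invert the strictly increasing map $d\mapsto\mu^2(d)+\frac{d^2}{2}-\frac{c^2}{2}$ to define $f$. The genuinely different ingredient is the treatment of the one non-trivial point, the monotonicity of $\mu(d)$. The paper avoids differentiation: it asserts, with a bare ``one can check'', that $F$ is strictly increasing in $\mu$ and strictly decreasing in $d$ on $0<\mu<d<c$, and then obtains $\mu(d_1)<\mu(d_2)$ from $F(\mu(d_1),d_1)=0=F(\mu(d_2),d_2)<F(\mu(d_2),d_1)$; the asserted $d$-monotonicity does hold (the pointwise bound $d^2(1-\lambda^2)\le c^2-\lambda^2 d^2$ controls the positive part of $\partial_d F$ by $2dI_1$), but it is a claim about the whole region, not only the solution curve. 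You instead differentiate the identity $F(\mu(d),d)\equiv0$ and prove $\mu'(d)>0$ by a Chebyshev-type correlation inequality that exploits the constraint (\ref{dmu1}) itself: since $\mu^2-\lambda^2d^2$ changes sign exactly once and integrates to zero against $W_1$, its integral against $W_1$ times the strictly increasing factor $\lambda^2/(c^2-\lambda^2d^2)$ is strictly negative. Your route supplies an honest proof where the paper hand-waves, at the price of invoking the implicit function theorem and of having to note separately (via $\mu^2=d^2I_1/I_0$ and dominated convergence) that $\mu$ is continuous up to $d=0$ and $d=c$ so that $\xi$ is a bijection of the closed intervals; the paper's two-variable comparison trick, once its monotonicity claim is actually verified, is more elementary and reaches the endpoints directly.
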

The proof of lemma \ref{lemdmu} is given in Appendix 2 (subsection 7.2).
\\\\ $\bullet$ \textbf{Functions $B(d), \tau(d),
\Delta(d)$}.
\\For $d\in(0,c)$ let $\w(k,d)=\sqrt{(k^2+c^2)(k^2+d^2)}$ be the analytic in $k$ function in the domain $k\in\mathbb{C}\backslash\([\i c,\i d]\cup[-\i
c,-\i d]\)$, where the square root is fixed by the condition
$\w(0,d)=cd$. Then we define $B(d), \tau(d), \Delta(d)$ as
follows:

\begin{equation} {\label{Bg}}B(d)=24\ds\int\limits_{\i d}^{\i
c}\dsfrac{(k^2+\mu^2(d))(k^2+d^2)\d k}{\w_+(k,d)},
\end{equation}

\begin{equation}{\label{tau}}\tau(d)=-\pi\i\ds\int\limits_{\i d}^{\i c}\dsfrac{\d k}{\w_+(k,d)}
\(\ds\int\limits_0^{\i d}\dsfrac{\d k}{\w(k,d)}\)^{-1},
\end{equation}

\begin{equation}{\label{Delta}}\Delta(d)=
 \int\limits_{\i d}^{\i c}\dsfrac{\ln\left(a_+(k)a_-(k)\right)\d k}{\w_+(k,d)}
\left(
 \i\int\limits_{0}^{\i d}\dsfrac{\d k}{\w(k,d)}\right)^{-1} .
\end{equation}

Here by $a_+(k)$, $\w_+(k)$ and $a_-(k)$, $\w_-(k)$ we denote the limit
values of $a(k), \w(k)$ taken from the right and from the left side of
the interval $[\i c,-\i c]$, respectively.

\begin{com} It can be verified that $\tau(d)<0.$ That is why
we can take $\tau(d)$ or $\dsfrac{4\pi^2}{\tau(d)}$ as the $\tau$
- parameter of $\Theta$ - function (\ref{Theta}). Defined by
(\ref{tau}) function $\tau(d)$ is indeed the $\tau$ - parameter of
$\Theta$ - function in formula (\ref{qmod}) from Theorem
\ref{teorKM}.
\end{com}

\begin{com} Let us explain some inconvenience in our notations
and the reasons why we can not make them more convenient.

First of all, we have a parameter $\xi=\dsfrac{x}{12t},$ the
so-called "slow" variable.

Secondly, we have a Riemann surface, which is associated with function $\w(k)=\sqrt{(k^2+c^2)(k^2+d^2)},$ $0<d<c.$

In the third place, we have the quantity $\tau$, which is defined by formula (\ref{tau}) and which is actually the $b$ -- period of this Riemann surface.

We will also use the quantity $\tau^*=\dsfrac{4\pi^2}{\tau}$, which is the $b$ - period of the same Riemann surface but with another homologic basis.

Finally, for our asymptotic analysis it is convenient to introduce the new "small" variable $\eta=1-\dsfrac{d}{c}.$

All  these quantities are in the one-to-one  correspondence.
But sometimes in the sequel  it is  convenient to consider some quantities as functions in $\xi,$ and sometimes as functions in $d,$ or $\tau,$ or $\tau^*$, or $\eta.$

That is why we need different notations for the same  quantity when we consider it as a function of different variables.

So, for $d$ as a function in $\eta$, we use  the  notation $d=d(\eta)$.
For $d$ as a function in $\xi$, we use the notation $d=f(\xi).$
Finally, for $d$ as a function in $\tau^*$, we use the  notation
$d=h(\tau^*).$
\end{com}

\section{Proof of Theorem \ref{teormain}}

Series which represent the theta-functions in (\ref{qmod}) are
slowly-convergent as $\xi$ tends to $\frac{c^2}{3},$
since in this case $\tau(f(\xi))$ tends to 0 (see lemma \ref{h}
in appendix 2). To overcome these difficulties, i.e. to make the theta
series well converged, we can use Poisson summation formula and
rewrite (\ref{qmod}) in terms of a rapidly convergent theta series.
Indeed, we have
\begin{equation}{\label{PoissonSum}}
\Theta(z\vert\tau)=\Theta\left(\displaystyle\frac{2\pi i
z}{\tau}\Big\vert\displaystyle\frac{4
\pi^2}{\tau}\right)\sqrt{\displaystyle\frac{2\pi}{-\tau}}
\left(\exp{\displaystyle\frac{-z^2}{2\tau}}\right).
\end{equation}
Then the model approximation (\ref{qmod}) of the solution takes
the form:
\begin{eqnarray}
 q_{ell}(12t\xi,t)
&=\sqrt{c^2-f^2(\xi)}\exp{\left(\displaystyle\frac{-\tau^*(f(\xi))}{8}+
\displaystyle\frac{\tau^*(f(\xi))}{4}\left(z(t,\xi)+1\right)
\right)}\\&\ \times\displaystyle\frac{
\Theta\left(\displaystyle\frac{\tau^*(f(\xi))}{2}\left(
z(t,\xi)+1\right)\Big\vert\tau^*(f(\xi))\right)}{\Theta\left(\displaystyle
\frac{\tau^*(f(\xi))}{2}z(t,\xi)
\Big\vert\tau^*(f(\xi))\right)}\label{qmod*},
\end{eqnarray}
where \begin{equation}\label{z(t,xi)}z(t,\xi)=\dsfrac{1}{\pi}\(tB(f(\xi))+\Delta(f(\xi))\)\end{equation} and
$\tau^*(d)=\dsfrac{4\pi^2}{\tau(d)}.$ We emphasize that now
$\tau^*(f(\xi))\to -\infty$ as $\xi\to\dsfrac{c^2}{3}$ and hence the theta series becomes well-converged.

Let us introduce the function F(.,.). For any $\tau*<0$ and
$z\in\mathbb{C}$ we set
\begin{equation}{\label{F}}
F(\tau^*,z)=\sqrt{c^2-h^2(\tau^*)}\exp\(\dsfrac{-\tau^*}{8}+\dsfrac{\tau^*}{4}(z+1)\)
\dsfrac{\Theta\(\dsfrac{\tau^*}{2}(z+1)\vert\tau^*\)}
{\Theta\(\dsfrac{\tau^*}{2}z\vert\tau^*\)}.
\end{equation}
Here function $h(.):(-\infty,0)\rightarrow(0,c)$ is the inverse function of the  function $\tau^*$: $\quad h(\tau^*(d))=d$  (see lemma \ref{h} in Appendix 2). Thus we can rewrite formula (\ref{qmod*}) by using the function $F(.,.):$
\begin{equation}\label{qelinF}q_{ell}(x,t)=F(\tau^*(f(\xi)),z(t,\xi)).\end{equation}

\noindent The following lemma describes some properties of
$F(.,.):$

\begin{lem}{\label{lemmaF}}
For any integer $N\geq1$ the following formulas
\begin{equation}\label{F(tau*,z)}F(\tau^*,z)=\sum_{n=1}^{N}\dsfrac{2c}{\cosh\frac{\tau^*(-1-z+2n)}{4}}+\mathrm{O}\(\e^{\frac{\tau^*}{4}}\),\quad
0\leq z\leq2N,\end{equation}
\[F(\tau^*,z)=\mathrm{O}\(\e^{\frac{\tau^*}{8}}\),\quad
\dsfrac{-1}{2}\leq z\leq\dsfrac{1}{2}\]
hold uniformly in $z$ as $\tau^*\rightarrow-\infty$.
\end{lem}
\noindent Proof of this lemma is given in Appendix 1.

Now we have to express the intervals $z\in[0,2N]$ and
$z\in[-\frac{1}{2},\frac{1}{2}]$ in terms of variables $x,t$. To
do this let us define  the   small parameters $\eta$ and $v$:
\[
\eta=\eta(d)=1-\ds\frac{d}{c}, \qquad \textrm{and} \qquad
v=v(\xi)=1-\ds\frac{3\xi}{c^2},
\]
which tend to zero as $\xi\to c^2/3$. The inverse of these
functions are
\[d(\eta)=c(1-\eta), \qquad \textrm{and} \qquad
\xi(v)=\dsfrac{c^2(1-v)}{3}.\] In Appendix 2 we found the
connection between $v$ and $\eta$ (see formulas (\ref{vineta}),
(\ref{etainv})),

\begin{equation}\nonumber\dsfrac{v}{8\e}=\dsfrac{\eta}{8\e}\ln\dsfrac{8\e}{\eta}
+\ord\eta^2\ln^2\eta\right),\quad \eta\rightarrow+0.\end{equation}

\begin{eqnarray}\nonumber
&\eta=\dsfrac{v}{\ln\dsfrac{1}{v}}\left(1-\frac {\ln
8\e+\ln\ln\dsfrac{1}{v}} {\ln\dsfrac{1}{v}}+
\ord\dsfrac{\ln^2\ln\dsfrac{1}{v}}{\ln^2\dsfrac{1}{v}}\right)\right)
,\quad v\rightarrow+0.
\end{eqnarray}
We also found the   asymptotic expansions of $B(d(\eta))$,
$\Delta(d(\eta))$, $\tau^*(d(\eta))$ in $\eta$
(see formulas
(\ref{Bgas}), (\ref{Deltaineta}), (\ref{tau^*})).
\begin{equation}B(d(\eta))=8\pi
c^3\eta(1+\ord\eta\ln\eta\right)),\quad\eta\rightarrow0.\nonumber
\end{equation}
\begin{eqnarray}\nonumber
&\frac{1}{\pi}\Delta(d(\eta))=\dsfrac{-1}{2}\left(1-\dsfrac{4\ln\dsfrac{2}{|h^*|}}{\ln\dsfrac{1}{\eta}}+
\ord\dsfrac{1}{\ln^2\eta}\right)\right).
\end{eqnarray}
\begin{eqnarray*}
\tau^*=\tau^*(d(\eta))=-4\ln\frac{8}{\eta}+\ord\eta\right).
\end{eqnarray*}

Thus we
come to the following lemma, the detailed proof is given in Appendix 1:\\
\begin{lem}{\label{lemmaz}}
\begin{equation}{\label{z}}z(t,\xi(v))=\dsfrac{8c^3tv}{\ln\dsfrac{1}{v}}-\dsfrac{1}{2}+
\mathrm{O}\(\dsfrac{1}{\ln\frac{1}{v}}+\dsfrac{tv\ln\ln\frac{1}{v}}{\ln^2v}\),\qquad {\textrm as}\quad v\to+0.
\end{equation}
\end{lem}
\bigskip

\textbf{The correspondence between intervals in $z$ and intervals in
$x,t$.}
\begin{lem}{\label{lemmaintervals}}\textbf{\\A. }
For any integer $N\geq1$ and any sufficiently small positive
$\varepsilon$ there exists $T=T(\varepsilon, N)$ such that for any
$x,t$ which satisfy the condition $t>T$ and  the condition
\begin{equation}\nonumber{\label{xtinterval}}4c^2t-\dsfrac{2N+\frac{1}{2}-\varepsilon}{2c}
\ln t<x<4c^2 t-\dsfrac{\varepsilon+\frac{1}{2}}{2c}\ln t
\end{equation}
the following formula is true:
\[0\leq z(t,\xi)\leq2N.\]
\textbf{\\B. } For any integer $N\geq1$ and any sufficiently small
positive $\varepsilon$ there exists $T=T(\varepsilon, N)$ such
that for any $x,t$ which satisfy the condition $t>T$ and condition
\begin{equation}\nonumber{\label{xtinterval}}4c^2t-\dsfrac{1-\varepsilon}{2c}
\ln t<x<4c^2t
\end{equation}
the following formula is true:
\[-\dsfrac{1}{2}\leq z(t,\xi)\leq\dsfrac{1}{2}.\]
\end{lem}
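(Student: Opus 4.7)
The plan is to translate the $x,t$-intervals into $v$-intervals via the relation $v=1-\dsfrac{x}{4c^2t}$ (obtained by combining $\xi=\dsfrac{x}{12t}$ with $\xi(v)=\dsfrac{c^2(1-v)}{3}$) and then to read off bounds on $z$ from the expansion
\[
z(t,\xi(v))=\dsfrac{8c^3tv}{\log\frac{1}{v}}-\dsfrac{1}{2}+\mathrm{O}\(\dsfrac{1}{\log\frac{1}{v}}+\dsfrac{tv\log\log\frac{1}{v}}{\log^2v}\)
\]
supplied by the preceding lemma.

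First I would observe that both admissible $x,t$-intervals transcribe into $v$-intervals with endpoints of order $\dsfrac{\log t}{t}$: in part A this is $v\in\(\dsfrac{(1/2+\varepsilon)\log t}{8c^3t},\dsfrac{(2N+1/2-\varepsilon)\log t}{8c^3t}\)$, and in part B it is $v\in\(0,\dsfrac{(1-\varepsilon)\log t}{8c^3t}\)$. Over this range $tv=\mathrm{O}\(\log t\)$ and $\log\dsfrac{1}{v}=\log t-\log\log t+\mathrm{O}\(1\)$ near each endpoint, so both error terms in the $z$-expansion are $o(1)$ uniformly as $t\to\infty$.

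Next, a direct differentiation yields $\dsfrac{d}{dv}\(\dsfrac{v}{\log(1/v)}\)=\dsfrac{1-\log v}{\log^2 v}>0$ for $v\in(0,1)$, so the principal term $\dsfrac{8c^3tv}{\log(1/v)}$ is strictly increasing in $v$. Its extremes on the interval are therefore attained at the endpoints, and evaluating them in part A yields $z=\varepsilon+o(1)$ at the lower endpoint and $z=2N-\varepsilon+o(1)$ at the upper endpoint; choosing $T=T(\varepsilon,N)$ large enough so that the $o(1)$ is absorbed into $\varepsilon$ gives $0\leq z\leq 2N$. For part B, the upper endpoint gives $z=\dsfrac{1}{2}-\varepsilon+o(1)\leq 2N+\dsfrac{1}{2}$, while the lower bound $z\geq-\dsfrac{1}{2}$ is inherited from the remark in the proof of the preceding lemma, applied uniformly once $T$ is chosen large enough so that the whole $v$-interval in (B) lies within the region of $\eta$ where that remark is valid (using the appendix fact that $\eta\to 0$ iff $v\to 0$).

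The main obstacle is purely bookkeeping: checking that the two error terms in the expansion really are $o(1)$ uniformly on the whole admissible $v$-interval (in particular, uniformly as $v\to 0^+$ in part B, where $\log(1/v)$ may grow faster than $\log t$), and that the resulting $o(1)$ corrections at the endpoints fit strictly inside the $\varepsilon$-margins prescribed by the statement. The sharp lower bound $z\geq-\dsfrac{1}{2}$ in part B cannot be recovered from the $o(1)$-expansion alone, which is precisely why one must invoke the stronger pointwise bound from the preceding lemma.
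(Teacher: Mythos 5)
Your proposal is correct and follows essentially the same route as the paper: translate the $x,t$-strips into $v$-intervals with endpoints of order $\frac{\log t}{t}$, evaluate the principal term $\frac{8c^3tv}{\log(1/v)}$ of the expansion of $z$ at those endpoints (the paper does this via explicit two-sided bounds rather than your monotonicity observation, but the computation is the same), absorb the $o(1)$ errors into the $\varepsilon$-margins by taking $T$ large, and invoke the separate pointwise remark $z\geq-\frac{1}{2}$ for the lower bound in part B. No gaps worth flagging.
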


\noindent Proof of this lemma is given in Appendix 1.\\

\textbf{Evaluating of cosh argument in formula (\ref{F(tau*,z)}) for $q_{ell}$}
\\Define $\alpha_n(x,t)$ by the relation
\begin{equation}\label{argument in cosh}\dsfrac{\tau^*(z-2n+1)}{4}\equiv2c(x-4c^2t)+\(2n-\frac{1}{2}\)\ln t-\alpha_n(x,t).\end{equation}
Here $\tau^*\equiv\tau^*(f(\xi))$, $z\equiv z(t,\xi)$ (\ref{z(t,xi)}), $x\equiv
12t\xi.$
 Then we have the following lemma
\begin{lem}{\label{lemmaarg}}
$\alpha_n(x,t)=$\\
$$-\(2n-\frac{1}{2}\)\ln\dsfrac{\ln\frac{1}{v}}{vt}-\dsfrac{8c^3tv}{\ln
\frac{1}{v}}-\(6n-\dsfrac{7}{2}\)\ln2-2\ln|h^*|+\mathrm{O}\(\dsfrac{\ln^2\ln\frac{1}{v}}{\ln
v}\).$$
The function $\alpha_n(x,t)$ is bounded for $4c^2t-\gamma\ln
t<x<4c^2t-\delta\ln t$, where $0<\delta<\gamma$ are arbitrary
numbers.
\end{lem}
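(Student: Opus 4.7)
The proof is a direct calculation. Solving the defining relation for $\alpha_n$ gives
\[
\alpha_n(x,t)=2c(x-4c^2t)+\(2n-\dsfrac{1}{2}\)\log t-\dsfrac{1}{4}\tau^*(z-2n+1),
\]
and the identity $2c(x-4c^2t)=-8c^3tv$ reduces the task to expanding the single product $\dsfrac{1}{4}\tau^*(z-2n+1)$ to the required accuracy in $v$.

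The plan is to substitute formula (\ref{z}) for $z$ together with the appendix expansion of $\tau^*(d(\eta))$ in $\eta$, and then convert from $\eta$ to $v$ using the change of variables (\ref{vineta})--(\ref{etainv}). The leading behaviour $\dsfrac{1}{4}\tau^*\sim-\log\dsfrac{1}{v}$ multiplied by the leading part $z-2n+1=\dsfrac{8c^3tv}{\log(1/v)}-\(2n-\dsfrac{1}{2}\)+\mathrm{O}(\ldots)$ produces two principal contributions: a piece $-8c^3tv$ which cancels the term $2c(x-4c^2t)$ coming from the definition, and a piece $\(2n-\dsfrac{1}{2}\)\log\dsfrac{1}{v}$. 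Combining the latter with the explicit $\(2n-\dsfrac{1}{2}\)\log t$ and using $\log t-\log(1/v)=\log(tv)$ yields the logarithmic contribution $-\(2n-\dsfrac{1}{2}\)\log\dsfrac{\log(1/v)}{vt}$ of the statement, once the missing $\log\log(1/v)$ piece is recovered from the subleading $\log\log(1/v)$ correction in the expansion of $\tau^*$ produced by the substitution $\eta\mapsto v$.

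Next I would track the second-order contributions. The constant term in the expansion of $\dsfrac{1}{4}\tau^*$ multiplied by $\dsfrac{8c^3tv}{\log(1/v)}$ supplies the explicit term $-\dsfrac{8c^3tv}{\log(1/v)}$ of the statement. The constant $-\(6n-\dsfrac{7}{2}\)\log 2$ is assembled from several sources: the $\dsfrac{2\log 2}{\log(1/\eta)}$ correction in the expansion of $z$ multiplied by the leading part of $\dsfrac{1}{4}\tau^*$, and the $\log 2$ constants produced in the expansions of $B_g$, $\Delta$ and $\tau^*$ in $\eta$. All remaining cross-products (leading-times-error and error-times-error in either factor) must be shown to be $\mathrm{O}\(\dsfrac{\log^2\log(1/v)}{\log v}\)$; this bookkeeping is the main difficulty, and it is controlled by using $tv=\mathrm{O}(\log t)$ in the regime of interest together with $\log(1/v)=\log t+\mathrm{O}(\log\log t)$, which forces the worst cross term (the $\dsfrac{tv\log\log(1/v)}{\log^2 v}$ error in $z$ times the leading $\log(1/v)$ piece of $\tfrac{1}{4}\tau^*$) to contribute only at the stated order.

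Finally, the boundedness on the intermediate interval $4c^2t-\gamma\log t<x<4c^2t-\delta\log t$ follows at once from the explicit expansion: there $tv/\log t$ is bounded away from $0$ and from $\infty$, whence $\log(1/v)=\log t+\mathrm{O}(\log\log t)$, $\log\dsfrac{\log(1/v)}{vt}=\mathrm{O}(1)$, and $\dsfrac{8c^3tv}{\log(1/v)}=\mathrm{O}(1)$, while the $\mathrm{O}$-remainder tends to zero. Thus each of the three explicit contributions to $\alpha_n$ is bounded on this interval, and so is $\alpha_n(x,t)$ itself.
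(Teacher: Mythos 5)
Your overall strategy — solve the defining relation for $\alpha_n(x,t)$, note $2c(x-4c^2t)=-8c^3tv$, and expand the product $\tfrac{1}{4}\tau^*(z-2n+1)$ — is exactly the paper's, and your treatment of the leading terms and of the boundedness claim is correct. However, there is a genuine gap in the error bookkeeping, which you yourself identify as the main difficulty. The formula (\ref{z}) for $z$ carries the error $\mathrm{O}\left(\tfrac{1}{\log(1/v)}+\tfrac{tv\log\log(1/v)}{\log^2 v}\right)$, and in the regime $4c^2t-\gamma\log t<x<4c^2t-\delta\log t$ one has $tv\asymp\log t\asymp\log(1/v)$; hence the cross term you single out as "worst" is
\[
\log\tfrac{1}{v}\cdot\mathrm{O}\left(\dsfrac{tv\log\log\frac{1}{v}}{\log^2 v}\right)=\mathrm{O}\left(\dsfrac{tv\log\log\frac{1}{v}}{\log\frac{1}{v}}\right)=\mathrm{O}\left(\log\log\tfrac{1}{v}\right),
\]
which is unbounded, not $\mathrm{O}\left(\tfrac{\log^2\log(1/v)}{\log v}\right)$ as you assert. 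So substituting (\ref{z}) as it stands cannot give the stated accuracy. The paper avoids this by working with the finer expansions $\tfrac{1}{\pi}B_g=\tfrac{8c^3v}{\log(1/v)}\left(1+\tfrac{\log\log(1/v)+\log 8\e}{\log v}\right)+\ldots$ and $\tfrac{1}{\pi}\Delta=-\tfrac{1}{2}-\tfrac{2\log 2}{\log v}+\ldots$, i.e.\ keeping the $\log\log(1/v)$ correction to $z$ explicitly; in the product this correction cancels exactly against the $-\log\log(1/v)$ term of $\tfrac{1}{4}\tau^*=\log v-\log\log\tfrac{1}{v}-3\log 2+\ldots$ multiplied by $\tfrac{8c^3tv}{\log(1/v)}$. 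Your proof never mentions this cancellation, and without it a divergent $\mathrm{O}(\log\log(1/v))$ term remains unaccounted for.

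A second, related error: you attribute the term $-\tfrac{8c^3tv}{\log(1/v)}$ of $\alpha_n$ to "the constant term in the expansion of $\tfrac{1}{4}\tau^*$ multiplied by $\tfrac{8c^3tv}{\log(1/v)}$". That constant is $-3\log 2$, which alone would give coefficient $24\log 2\ne -8c^3\cdot(\text{correct factor})$. The correct coefficient comes only after combining this with the $\log 8\e$ constant from the $B_g$ correction, since $\log 8\e-3\log 2=1$; similarly the constant $-\left(6n-\tfrac{7}{2}\right)\log 2$ needs the contribution $-2\log 2$ from $\log v\cdot\left(-\tfrac{2\log 2}{\log v}\right)$ in $\Delta$, not only the $3\left(2n-\tfrac{1}{2}\right)\log 2$ piece. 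In short: the structure of your argument is right, but the claimed control of the remainders fails at the stated precision, and the two explicit lower-order terms of the lemma cannot be produced from the ingredients you list; you must redo the computation with the refined expansions of $B_g$, $\Delta$ and $\tau^*$ in which the divergent $\log\log$ contributions cancel.
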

\noindent Proof of this lemma is given in Appendix 1.\\

\noindent By summarizing lemmas \ref{lemmaF}, \ref{lemmaintervals}, \ref{lemmaarg},  and formulas (\ref{qelinF}), (\ref{argument in cosh}),  we obtain the statements of the Theorem
\ref{teormain}. Q. E. D.

\section{Comparing of $q_{ell}(x,t)$ and $q(x,t)$ in the domain $4c^2t-\gamma\ln t<x\leq 4c^2t.$}\label{Comparing}
In this section we compare the asymptotics for the solution $q(x,t)$ of the ibv (\ref{mkdv})-(\ref{ic})  with $q_{ell}(x,t)$
(\ref{qmodsumofasymptoticsolitons}).

\textbf{Theorem} \cite{KK} \textit{Let $N\geq1$ and
$M$ be such integer numbers that
$N=\left[\dsfrac{M+1}{2}\right]$. Then for
$$x>4c^2t-\dsfrac{(M+1)\ln t}{2c}$$
the solution of the IBV problem (\ref{mkdv})-(\ref{ic}) is presented as follows
\begin{eqnarray}\nonumber
&q(x,t)=
\sum\limits_{n=1}^N\dsfrac{2c}{\cosh\left(2c(x-4c^2t)+\left(2n-\dsfrac{1}{2}\right)
\ln
t-\tilde\alpha_n\right)}+\mathrm{O}\left(t^{-\frac{1}{2}+\varepsilon}\right),\\\nonumber
&\mathrm{where\ }
\tilde\alpha_n=\ln\left[\dsfrac{|h_0|}{(n-1)!^24^{2n-1}(2c)^{6n-3}}\dsfrac{\Gamma^{(n)}_1\Gamma^{(n)}_{3\slash2}}
{\Gamma^{(n-1)}_1\Gamma^{(n-1)}_{3\slash2}}\right],\\\nonumber
&\Gamma^{(k)}_{b}=\det\left[\left[\Gamma(i+j+b)\right]\right]_{i,j=\overline{0,k-1}}.\nonumber
\end{eqnarray}
Here $h_0$ is some constant, which is determined by the initial function $q_0(x)$, and $\Gamma(.)$ is the Gamma-function.
}

First of all we reformulate the theorem by simplifying expression for
$\tilde\alpha_n$. The determinants $\Gamma^{(k)}_{b}$ are calculated explicitly and one can obtain that
$$
\dsfrac{\Gamma^{(n)}_1\Gamma^{(n)}_{3\slash2}}
{\Gamma^{(n-1)}_1\Gamma^{(n-1)}_{3\slash2}}=(n-1)!^2\Gamma(n)\Gamma(n+1/2).
$$
Now without loss of generality we can choose $M=2N$. Then we easily obtain the statement of Theorem \ref{teorKK}.

Let us consider  the  curve
$x=4c^2t-\dsfrac{2m-\frac{1}{2}}{2c}\ln
t+\dsfrac{\tilde\alpha_m}{2c},$ where $m\in\mathbb{N}.$ All the
summands in formula (\ref{qsumofasymptoticsolitons}) are exponentially
small except for  the   one with $n=m$. Then  for $x,t$ which lie on
this curve we have
\[q_{as}(x,t)=2c.\]

Now let us consider $q_{ell}(x,t)$. All the summands in formula
(\ref{qmodsumofasymptoticsolitons}) are exponentially small  except for  the
one with $n=m$. According to formula (\ref{alpha_nxt}), for
$x=4c^2t-\dsfrac{2m-\frac{1}{2}}{2c}\ln
t+\dsfrac{\tilde\alpha_m}{2c}$ and $t\to\infty$ we have the
following:

$$\dsfrac{vt}{\ln \frac{1}{v}}=\dsfrac{2m-\frac{1}{2}}{8c^3}\(1+\mathrm{O}\(\dsfrac{\ln\ln t}{\ln t}\)\),$$
$$\ln\dsfrac{vt}{\ln \frac{1}{v}}=\ln\dsfrac{2m-\frac{1}{2}}{8c^3}+\mathrm{O}\(\dsfrac{\ln\ln t}{\ln t}\),$$
and
\begin{equation}\label{alpha_alpha_tilde} \alpha_m(x,t)=\alpha_m+\mathrm{O}\(\dsfrac{\ln^2\ln t}
{\ln t}\),\end{equation} where $\alpha_m$ is defined by the formula
\begin{equation}{\label{alpha}}\alpha_m:=\(2m-\dsfrac{1}{2}\)\ln\dsfrac{2m-\frac{1}{2}}{8c^3\e}-\(6m-\dsfrac{7}{2}\)\ln
2-2\ln|h^*|.\end{equation} Then
$$q_{ell}(x,t)=\dsfrac{2c}{\cosh\(\tilde\alpha_m-\alpha_m(x,t)\)}=
\dsfrac{2c}{\cosh\(\tilde\alpha_m-\alpha_m\)}+\mathrm{O}\(\dsfrac{\ln^2\ln
t}{\ln t}\).$$ Since $\alpha_m$ (\ref{alpha}) and $\tilde\alpha_m$
(\ref{tildealpha}) are not equal, we conclude that $q_{as}(x,t)-q_{ell}(x,t)$ does
not tend to 0 as $t\to\infty$.

Thus the  modulated elliptic wave is unsuitable  in the region $4c^2t-\dsfrac{N+1/4}{c}<x<4c^2t$ with any natural $N$. Instead, the  asymptotic solitons (\ref{qass}), (\ref{tildealpha}) give the right asymptotic behavior of the solution in this region.

\section{Matching}\label{Matching}

In this section we prove Theorem \ref{betan}.
\begin{proof}
The   asymptotic solitons from \cite{KK} take the form
$$
s_{as}(x,t)=\frac{2c}{\cosh[2c(x-4c^2t+\frac{1}{2c}\ln t^{2n-1/2}-\frac{\tilde\alpha_n}{2c})]} ,
$$
where (see Theorem \ref{teorKK})
$$
\tilde\alpha_n=\ln\left[\dsfrac{|h_0|}{4^{2n-1}(2c)^{6n-3}}
\Gamma(n)\Gamma\(n+\dsfrac{1}{2}\)\right].
$$

Now let us consider the phases in details.
Indeed, what is $h_0$?
From \cite{KK},   for   a  general step-like function we have
\begin{equation}\label{h0}
h(k(\rho))=\rho^{1/2}(h_0+h_1\rho^{1/2}+\ldots),
\end{equation}
where $\rho=2\nu(C +12\mu^2-4\nu^2)$, $k=\mu+\i\nu$, $C =4 {\rm Im}^2 E-12 {\rm Re}^2 E$.  In our case $E=\i c, \mu=0.$

Hence, we have
$$
h(k)=\dsfrac{1}{2\pi}f_+(k)=\dsfrac{\i}{2\pi a_-(k)a_+(k)}.
$$
Therefore, taking into account (\ref{asingularity}),
$$h(k)=\dsfrac{1}{2\pi {h^*}^{2}}\dsfrac{2\i}{c}\sqrt{k^2+c^2}\left[1+\mathrm{O}
(\sqrt{k-\i c})\right]=\dsfrac{\i}{\pi {h^*}^{2}c}\sqrt{k^2+c^2}\left[1+\mathrm{O}
(\sqrt{k-\i c})\right].
$$
  On the other hand, since $k=\i\nu$, $E=\i c$, we get $\rho=8\nu(c^2-\nu^2)$ and
$$
\sqrt{k^2+c^2}=\sqrt{\dsfrac{\rho}{8c}}(1+\dsfrac{\rho}{32c^3}+\ldots).
$$
Hence
$$
h(k(\rho))=\dsfrac{\i}{\pi {h^*}^{2}(2c)^{3/2}}\sqrt{\rho}[1+\mathrm{O}(\sqrt{\rho})],
$$
Thus
\begin{equation}\label{|h0|}
|h_0|=\dsfrac{1}{{h^*}^2\pi(2c)^{3/2}}.
\end{equation}
Therefore
$$
\tilde\alpha_n=\ln\left[\dsfrac{|h_0|}{4^{2n-1}(2c)^{6n-3}}
\Gamma(n)\Gamma\(n+\dsfrac{1}{2}\)\right]=\ln\left[\dsfrac{\Gamma(n)\Gamma\(n+\dsfrac{1}{2}\)}
{(h^*)^2\pi 2^{4n-2}(2c)^{6n-3/2}}\right].
$$
Since  $\Gamma(n)=(n-1)!$ and $\Gamma(n+1/2)=\dsfrac{\sqrt{\pi}(2n)!}{2^{2n} n!}$
(sf.\cite{Lavrentiev Sabat}), we have
\begin{equation}\label{alpha_tilde_final}
\tilde\alpha_n=\ln\left[\dsfrac{{(h^*)}^{-2}(2n)!}{2^{12n-7/2}(c)^{6n-3/2} \sqrt{\pi}\, n}\right]
\end{equation}

\noindent
The elliptic wave (\ref{qmod}) is the sum of the elliptic asymptotic solitons (\ref{qmodsumofasymptoticsolitons})
$$
s_{ell}(x,t)=\frac{2c}{\cosh[2c(x-4c^2t+\frac{1}{2c}\ln t^{2n-1/2}-\frac{\alpha_n(x,t)}{2c})]},
$$

Let $x=x_n(t)$ be such a curve that $s_{ell}(x_n(t),t)\equiv2c$. Then $x_n(t)$ is the solution of the equation
\begin{equation}\label{x_n(t) definition implicit}
x_n(t)=4c^2t-\frac{1}{2c}\ln t^{2n-1/2}+\frac{\alpha_n(x_n(t),t)}{2c}.
\end{equation}
We look for the solution in the form
$$
x_n(t)=4c^2t-\frac{1}{2c}\ln t^{2n-1/2}+\frac{z_n(t)}{2c}.
$$
Hence
$$
z_n(t)= \alpha_n(x_n(t),t).
$$
It is shown in section \ref{Comparing}, formulas (\ref{alpha_alpha_tilde}), (\ref{alpha}), that for any bounded in $t$ function $z_n(t)$
$$
\alpha_n(x_n(t),t)=\alpha_n+\mathrm{O}\left(\frac{\ln^2\ln t}{\ln t}\right),
$$
 where  $$
\alpha_n=(2n-1/2)\ln\left(\dsfrac{2n-1/2}{8c^3\e}\right)-(6n-7/2)\ln 2-2\ln|h^*|,
$$
as $t\to\infty$. Therefore $z_n(t)=\alpha_n+\mathrm{O}\left(\frac{\ln^2\ln t}{\ln t}\right).$
\\
On the other hand,
\begin{eqnarray*}
\alpha_n&=(2n-1/2)\ln\left(\dsfrac{2n-1/2}{8c^3\e}\right)-(6n-7/2)\ln 2-2\ln|h^*|
=\\&=\ln\left[\left(\dsfrac{2n-1/2}{\e}\right)^{2n-1/2}\dsfrac{(h^*)^{-2}}{2^{12n-5}(c)^{6n-3/2} }\right],
\end{eqnarray*}
and hence  (\ref{alpha_tilde_final})
$$
\alpha_n-\tilde\alpha_n=\ln\left[\left(\dsfrac{2n-1/2}{\e}\right)^{2n-1/2}
\dsfrac{n\sqrt{\pi}\, 2^{3/2}}{(2n)! }\right].
$$
For a finite $n$ the phase difference is not equal to zero and    the  elliptic asymptotic solitons do not coincide with the  asymptotic solitons.
But for a large $n$, we have
$$
(2n-1/2)^{2n-1/2}=(2n)^{2n-1/2}(1-1/4n)^{4n/2-1/2}=
(2n)^{2n-1/2}\e^{-1/2}[1+\mathrm{O}(1/4n)],
$$
$$
(2n-1/2)^{2n-1/2}= \dsfrac{(2n)^{2n}}{\sqrt{2n\e}} [1+\mathrm{O}(1/4n)],
$$
i.e.
$$
\left(\dsfrac{2n-1/2}{\e}\right)^{2n-1/2}= \left(\dsfrac{2n}{\e}\right)^{2n}\dsfrac{1}{\sqrt{2n}} [1+\mathrm{O}(1/4n)]
$$
and
$$
\alpha_n-\tilde\alpha_n=\ln\left[\left(\dsfrac{2n}{\e}\right)^{2n}\dsfrac{1}{\sqrt{2n}}
\dsfrac{n\sqrt{\pi}\, 2^{3/2}}{(2n)! }[1+\mathrm{O}(1/4n)]\right]
$$
Now let us use the well-known formula \cite[p.585, formulae 26]{Lavrentiev Sabat}
$$
(2n)!=\Gamma(2n+1)=\sqrt{4\pi n}\left(\dsfrac{2n}{\e}\right)^{2n}[1+\mathrm{O}(1/2n)]
$$
Then
$$
\alpha_n-\tilde\alpha_n=\ln [1+\mathrm{O}(1/2n)]=\mathrm{O}(1/2n).
$$
Hence we have the following estimate of smallness of the deviation $\beta_n(t)$:
$$
\beta_n(t)=\alpha_n-\tilde\alpha_n +\mathrm{O}\left(\frac{\ln^2\ln t}{\ln t}\right)=
\mathrm{O}\left(\frac{1}{2n}\right)+\mathrm{O}\left(\frac{\ln^2\ln t}{\ln t}\right).
$$
\end{proof}

It means that the  phases  $\alpha_n$ and $\tilde\alpha_n$ and   the  corresponding solitons coincide in the large $n$ and $t$ limit. It also means that a  parametrix at   the  point $\i c$  must be described with the help of the asymptotic solitons (\ref{qass}). Moreover, the solution of this parametrix problem gives the same asymptotic solitons, which were obtained many years ago in \cite{KK} via the Marchenko integral equations of the inverse scattering. For a large $n$ these asymptotic solitons are close to the corresponding  solitons generated by   the   elliptic wave $q_{ell}(x,t)$. Thus the  front  part of the elliptic wave must be eliminated and, instead, this part has to  be the asymptotic solitons generated by  the parametrix at  the  point $\i c$.

Finally, the main term of the asymptotics ($ N,t\gg 1$) of the solution of the nonlinear problem (\ref{mkdv})-(\ref{ic}) is presented as follows:
    $$
    q(x,t)\sim\left\{\begin{array}{ccc}
    c,& x\le -6c^2t;\\ q_{ell}(x,t), & -6c^2t\le x<4c^2t-\dsfrac{2N-3/2}{2c}\ln t;\\
    q_{as}(x,t), & 4c^2t-\dsfrac{2N+1/2}{2c}\ln t <x\le 4c^2t;\\
    0, & x>  4c^2t,
    \end{array}\right.
    $$
where $q_{ell}(x,t)$ and $q_{as}(x,t)$ are defied in (\ref{qmodsumofasymptoticsolitons}) and  (\ref{qass}) respectively.
 
\section{Appendix 1}

\begin{proof}\textbf{of lemma \ref{lemmaF}.\\}
We have that
$$F(\tau^*,z)=\sqrt{c^2-h^2(\tau^*)}\exp\(\dsfrac{-\tau^*}{8}+\dsfrac{\tau^*(z+1)}{4}\)
\dsfrac{\Theta\(\dsfrac{\tau^*(z+1)}{2}|\tau^*\)}
{\Theta\(\dsfrac{\tau^*z}{2}|\tau^*\)}.$$
\\In appendix 2 we prove (see formula (\ref{c2-d2})), that $\sqrt{c^2-h^2(\tau^*)}\exp\(\dsfrac{-\tau^*}{8}\)=4c\(1+\mathrm{O}
\(\e^{\frac{\tau^*}{4}}\)\).$
\\Taking into account the definition of  $\Theta$ function (see formula $(\ref{Delta})$), we can easily verify that for $\tau^*\rightarrow-\infty$
the next formula is true uniformly for $-\dsfrac{1}{2}\leq
z\leq\dsfrac{5}{2}$:
\[F(\tau^*,z)=4c\(1+\mathrm{O}\(\e^{\frac{\tau^*}{4}}\)\)
\e^{\frac{\tau^*(z+1)}{4}}\dsfrac{\mathrm{O}\(\e^{\tau^*(1-z)}\)+\e^{\frac
{-\tau^*z}{2}}+\mathrm{O}\(1\)}
{\mathrm{O}\(\e^{\tau^*(2-z)}\)+\e^{\frac{\tau^*(1-z)}{2}}+1+\mathrm{O}\(
\e^{\frac{\tau^*(1+z)}{2}}\)}.\] Simple calculations lead us to
the next formula:
\[F(\tau^*,z)=\dsfrac{4c\(1+\mathrm{O}\(\e^{\frac{\tau^*}{4}}\)\)
\(1+\mathrm{O}\(\e^{\frac{\tau^*z}{2}}+\e^{\frac{\tau^*(2-z)}{2}}\)\)}
{\e^{\frac{\tau^*(1-z)}{4}}+\e^{\frac{-\tau^*(1-z)}{4}}+\mathrm{O}\(
\e^{\frac{\tau^*(7-3z)}{4}}+\e^{\frac{\tau^*(1+3z)}{4}}\)},
\]
and then
\[F(\tau^*,z)=\dsfrac{4c\(1+\mathrm{O}\(\e^{\frac{\tau^*}{4}}\)\)
\(1+\mathrm{O}\(\e^{\frac{\tau^*z}{2}}+\e^{\frac{\tau^*(2-z)}{2}}\)\)}
{2\cosh\frac{\tau^*(1-z)}{4}\(1+ \mathrm{O}\(
\frac{\e^{\frac{\tau^*(7-3z)}{4}}+\e^{\frac{\tau^*(1+3z)}{4}}}
{\cosh\frac{\tau^*(1-z)}{4}}\)\)}.
\]
Now let us note that
\[\mathrm{O}\(
\frac{\e^{\frac{\tau^*(7-3z)}{4}}+\e^{\frac{\tau^*(1+3z)}{4}}}
{\cosh\frac{\tau^*(1-z)}{4}}\)=\mathrm{O}\(\e^{\frac{\tau^*(3-z)}{2}}+
\e^{\frac{\tau^*(1+z)}{2}}\).\]

Then
\[F(\tau^*,z)=\dsfrac{2c}{\cosh\frac{\tau^*(1-z)}{4}}
\(1+\mathrm{O}\(\e^{\frac{\tau^*}{4}}+
\e^{\frac{\tau^*z}{2}}+\e^{\frac{\tau^*(2-z)}{2}}+
\e^{\frac{\tau^*(3-z)}{4}}+ \e^{\frac{\tau^*(1+z)}{2}}\)\),
\]
and
\[F(\tau^*,z)=\dsfrac{2c}{\cosh\frac{\tau^*(1-z)}{4}}
\(1+\mathrm{O}\(\e^{\frac{\tau^*}{4}}+
\e^{\frac{\tau^*z}{2}}+\e^{\frac{\tau^*(2-z)}{2}} \)\).
\]
Therefore
\[F(\tau^*,z)=\dsfrac{2c}{\cosh\frac{\tau^*(1-z)}{4}}
+\mathrm{O}\(\dsfrac{\e^{\frac{\tau^*}{4}}+
\e^{\frac{\tau^*z}{2}}+\e^{\frac{\tau^*(2-z)}{2}}
}{\cosh\frac{\tau^*(1-z)}{4}}\).
\]
The argument of the $\mathrm{O}$-estimate in the last formula may
be change with:
\[\begin{array}{cc}
    \e^{\frac{\tau^*(1+z)}{4}}&,\textrm{ for }-\dsfrac{1}{2}\leq z\leq\dsfrac{1}{2},\\\\
    \e^{\frac{\tau^*(2-z)}{4}}&,\textrm{ for }\dsfrac{1}{2}\leq z\leq1,\\\\
    \e^{\frac{\tau^*z}{4}}&,\textrm{ for }1\leq z\leq\dsfrac{3}{2},\\\\
    \e^{\frac{\tau^*(3-z)}{4}}&,\textrm{ for }\dsfrac{3}{2}\leq z\leq\dsfrac{5}{2}.\\\\
  \end{array}
\]
Therefore
\[F(\tau^*,z)=\dsfrac{2c}{\cosh\frac{\tau^*(1-z)}{4}}
+\mathrm{O}\(\e^{\frac{\tau^*}{4}}\)\quad \textrm{ for }0\leq
z\leq2,
\]
and
\[F(\tau^*,z)=\mathrm{O}\(\e^{\frac{\tau^*}{8}}\)\quad \textrm{ for }\dsfrac{-1}{2}\leq z\leq\dsfrac{1}{2}.
\]
Also we note that $\dsfrac{2c}{\cosh\frac{\tau^*(1-z)}{4}}=\mathrm{O}\(\e^{\frac{\tau^*}{4}}\)\quad \textrm{ for } z\in\mathbb{R}\backslash(0,2).$\\
Besides this, due to property ($\ref{Thetashift}$) of
$\Theta$-function, we obtain that $F(\tau^*,z)=F(\tau^*,z-2n)$ for
any integer n. Therefore

\[F(\tau^*,z)=\dsfrac{2c}{\cosh\frac{\tau^*(-1-z+2n)}{4}}
+\mathrm{O}\(\e^{\frac{\tau^*}{4}}\)\quad \textrm{ for }2n-2\leq
z\leq2n,
\]
and
\[\dsfrac{2c}{\cosh\frac{\tau^*(-1-z+2n)}{4}}=\mathrm{O}\(\e^{\frac{\tau^*}{4}}\)\quad \textrm{ for } z\in\mathbb{R}\backslash(2n-2,2n).\]
So, we obtain that for any integer $N\geq1$
\[F(\tau^*,z)=\sum_{n=1}^{N}\dsfrac{2c}{\cosh\frac{\tau^*(-1-z+2n)}{4}}+\mathrm{O}\(\e^{\frac{\tau^*}{4}}\),\quad
0\leq z\leq2N.\]
\[F(\tau^*,z)=\mathrm{O}\(\e^{\frac{\tau^*}{8}}\),\quad
\dsfrac{-1}{2}\leq z\leq\dsfrac{1}{2}.\]
\end{proof}

\begin{proof}\textbf{of lemma \ref{lemmaz}.\\}
Recall that
$$z(t,f^{-1}(d(\eta)))=\dsfrac{tB(d(\eta))+\Delta(d(\eta))}{\pi}=8c^3t\eta-\dsfrac{1}{2}+
\dsfrac{2\log\dsfrac{2}{|h_1|}}{\log\dsfrac{1}{\eta}}+\mathrm{O}\(t\eta^2\log\dsfrac{1}{\eta}+
\dsfrac{1}{\log^2\eta}\).$$ From this formula we can deduce that
for any positive $t$ and sufficiently small $\eta$ $:
z\geq\dsfrac{-1}{2}$. Further we will use this remark.
\\By substituting in the last formula for $z$ expression (\ref{etainv}) for $\eta$ in $v$ we obtain the equation ($\ref{z}$).
\end{proof}

\begin{proof}\textbf{of lemma \ref{lemmaintervals}.\\}
Let $\gamma$ and $\delta$ be a some positive numbers such that
$\delta<\gamma$ and let us consider interval
\begin{equation}{\label{xtinterval}}4c^2t-\gamma\log t<x<4c^2 t-\delta\log t
\end{equation}
or
$$1-\dsfrac{\gamma\log t}{4c^2t}<\dsfrac{x}{4c^2t}<1-\dsfrac{\delta\log t}{4c^2t}.$$
We remember that $v=1-\dsfrac{x}{4c^2t}$ and then the last formula
can be rewrite in the next form:
\begin{equation}{\label{v}}\dsfrac{\delta\log t}{4c^2t}<v<\dsfrac{\gamma\log t}{4c^2t}
\end{equation}
We see that for $x,t$, which lay in the interval
$$4c^2t-\gamma\log t<x<4c^2 t-\delta\log t,$$ if
$t\rightarrow\infty$ then $v\rightarrow0$. Further we will use
this fact.
\\
For $t$ and $v$ such that
\begin{equation}{\label{vtinterval}}\dsfrac{\delta\log t}{4c^2t}<v<\dsfrac{\gamma\log t}{4c^2t}\end{equation}
also
\[\dsfrac{1}{\log t}\cdot\dsfrac{1}{1-\frac{\log\log t-\log\frac{4c^2}{\delta}}{\log t}}<\dsfrac{1}{\log\dsfrac{1}{v}}<\dsfrac{1}{\log t}\cdot\dsfrac{1}{1-\frac{\log\log t-\log\frac{4c^2}{\gamma}}{\log t}},\]
and then
\[\dsfrac{2c\delta}{1-\frac{\log\log t-\log\frac{4c^2}{\delta}}{\log t}}<\dsfrac{8c^3tv}{\log\frac{1}{v}}<\dsfrac{2c\gamma}{1-\frac{\log\log t-\log\frac{4c^2}{\gamma}}{\log t}},\]
and
\[\dsfrac{2c\delta}{1-\frac{\log\log t-\log\frac{4c^2}{\delta}}{\log t}}
-\dsfrac{8c^3tv}{\log\frac{1}{v}}+z<z<\dsfrac{2c\gamma}{1-\frac{\log\log
t-\log\frac{4c^2}{\gamma}}{\log t}}
-\dsfrac{8c^3tv}{\log\frac{1}{v}}+z.\] And then, due to formula
($\ref{z}$), we obtain that for any positive $\varepsilon$ there
exists $T=T(\varepsilon, \gamma, \delta)$ such that for any $t>T$,
for any $v$ which satisfy (\ref{v}), and uniformly for
$\delta,\gamma$ from the interval $\dsfrac{1}{4c}<\delta,
\gamma<\dsfrac{2N+1}{2c}$ the following inequality holds true:
\[2c\delta-\dsfrac{1}{2}-\varepsilon\leq z\leq2c\gamma-\dsfrac{1}{2}+\varepsilon\]
\textbf{A. }Now let us take
$\delta=\dsfrac{\varepsilon+\frac{1}{2}}{2c}$,
$\gamma=\dsfrac{2N+\frac{1}{2}-\varepsilon}{2c}$.
\\If we take into account that $(\ref{v})$ is equivalent to $(\ref{xtinterval})$,
then we obtain that for any positive $\varepsilon$ there exists
$T=T(\varepsilon,N)$ so that for any $t>T$ and any $x$ such that
$$4c^2t-\dsfrac{2N+\frac{1}{2}-\varepsilon}{2c}\log t\leq
x\leq4c^2t-\dsfrac{\frac{1}{2}+\varepsilon}{2c}\log t$$ the
following is true:
\[0\leq z\leq2N.
\]
\textbf{B. }If we take now $\gamma=\dsfrac{1-\varepsilon}{2c}$ and
take into account that always for big $t:$ $z\geq\frac{-1}{2}$,
then we obtain that for any positive $\varepsilon$ there exists
$T=T(\varepsilon)$ so that for any $t>T$ and any $x$ such that
$$4c^2t-\dsfrac{1-\varepsilon}{2c}\log t< x\leq4c^2t$$ the
following is true:
\[\dsfrac{-1}{2}\leq z\leq\dsfrac{1}{2}.
\]
\end{proof}

\begin{proof} \textbf{of lemma \ref{lemmaarg}.\\}
Let us express $\dsfrac{\tau^*(z-2n+1)}{4}$ in terms of $v,t$ and
therefore $x,t$. We know that
\[\hskip-2cm\tau^*=-4\log\dsfrac{8}{\eta}+\mathrm{O}(\eta)=4\log v\(1-\dsfrac{\log\log\frac{1}{v}+3\log2}{\log v}+\dsfrac{\log\log\frac{1}{v}+\log8\e}{\log^2v}+
\mathrm{O}\(\dsfrac{\log^2\log\frac{1}{v}}{\log^3v}\)\),\]
\[z=\dsfrac{tB+\Delta}{\pi},\]
\[\frac{1}{\pi}B=8c^3\eta+\mathrm{O}\(\eta^2\log\eta\)=
\dsfrac{8c^3v}{\log\frac{1}{v}}\(1+\dsfrac{\log\log\frac{1}{v}+\log8\e}{\log
v}\)+\mathrm{O}\(\dsfrac{v\log^2\log\frac{1}{v}}{\log^3v}\),\]
\[\frac{1}{\pi}\Delta=\dsfrac{-1}{2}-\dsfrac{2\log2}{\log v}+\mathrm{O}\(\dsfrac{\log\log\frac{1}{v}}{\log^2v}\).\]
And then for $t,v$, which satisfy $(\ref{vtinterval})$, in
forward, but careful calculations we obtain that
\[\dsfrac{\tau^*(z-2n+1)}{4}=-8c^3tv+\(2n-\frac{1}{2}\)\log t+
\(2n-\frac{1}{2}\)\log\dsfrac{\log\frac{1}{v}}{vt}+\dsfrac{8c^3tv}{\log
\frac{1}{v}}+\]\[+\(6n-\dsfrac{7}{2}\)\log2+\mathrm{O}\(\dsfrac{\log^2\log\frac{1}{v}}{\log
v}\).\] And by recalling that $v=1-\frac{x}{4c^2t}$ is a function
of $x,t$, we obtain that
\[\dsfrac{\tau^*(z-2n+1)}{4}=2c(x-4c^2t)+\(2n-\frac{1}{2}\)\log t-\alpha_n(x,t),\]
where
\[\alpha_n(x,t)=
-\(2n-\frac{1}{2}\)\log\dsfrac{\log\frac{1}{v}}{vt}-\dsfrac{8c^3tv}{\log
\frac{1}{v}}-\(6n-\dsfrac{7}{2}\)\log2+\mathrm{O}\(\dsfrac{\log^2\log\frac{1}{v}}{\log
v}\).\] While proving lemma \ref{lemmaintervals} we also have
proved that $\alpha_n(x,t)$ is bounded for $4c^2t-\gamma\log
t<x<4c^2t-\delta\log t$, where $0<\delta<\gamma$ are arbitrary
numbers.
\end{proof}
\section{Appendix 2}
\subsection{$b$-periods $\tau$ and $\tau^*$.}
\begin{lem}{\label{h}} Let
$\tau^*(d)=\dsfrac{4\pi^2}{\tau(d)}$, where $\tau(.)$ is defined
by formula $(\ref{tau})$. Then

\begin{enumerate}
    \item for $d\in(0,c): \tau^*(d)<0$;
    \item $\tau^*(.)$ is decreasing function on the interval
    $d\in(0,c)$;
    \item $\tau^*(+0)=0,\qquad \tau^*(c-0)=-\infty$.
\end{enumerate}
Therefore there exists a function
$h(.):(-\infty,0)\rightarrow(0,c)$, which is the inverse of
$\tau^*(.)$: $\quad h(\tau^*(d))=d$.
\end{lem}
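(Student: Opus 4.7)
The plan is to reduce the two contour integrals in (\ref{tau}) to real integrals on segments of the positive imaginary axis, recognize them as standard complete elliptic integrals of the first kind $\mathbf{K}$, and then read off items (1)--(3) directly from the classical monotonicity and boundary behavior of $\mathbf{K}$.

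First I would parametrize by $k=\i\lambda$. The normalization $\w(0,d)=cd>0$ together with the placement of the cuts on $[\i d,\i c]\cup[-\i c,-\i d]$ determines
\[
\w(\i\lambda,d)=\sqrt{(c^2-\lambda^2)(d^2-\lambda^2)},\qquad \lambda\in(0,d),
\]
and, tracking $\arg\w$ from $k=0$ along a small arc clockwise around $\i d$,
\[
\w_+(\i\lambda,d)=\i\sqrt{(c^2-\lambda^2)(\lambda^2-d^2)},\qquad \lambda\in(d,c).
\]
With $\d k=\i\,\d\lambda$ substituted into (\ref{tau}), the $\i$ factors cancel in the ratio and one obtains
\[
\tau(d)=-\pi\,\frac{\ds\int_d^c\dsfrac{\d\lambda}{\sqrt{(c^2-\lambda^2)(\lambda^2-d^2)}}}{\ds\int_0^d\dsfrac{\d\lambda}{\sqrt{(c^2-\lambda^2)(d^2-\lambda^2)}}}<0,
\]
so $\tau^*(d)=4\pi^2/\tau(d)<0$, which is item (1). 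The substitutions $\lambda=d\sin\theta$ in the denominator and $\lambda^2=c^2-(c^2-d^2)\sin^2\theta$ in the numerator then identify these real integrals with $\tfrac1c\mathbf{K}(d/c)$ and $\tfrac1c\mathbf{K}'(d/c)$ respectively, where $\mathbf{K}'(k):=\mathbf{K}(\sqrt{1-k^2})$, yielding the closed form
\[
\tau^*(d)=-4\pi\,\frac{\mathbf{K}(d/c)}{\mathbf{K}'(d/c)}.
\]

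From this closed form items (2) and (3) are immediate. Since $\mathbf{K}$ is smooth and strictly increasing on $[0,1)$, the numerator $\mathbf{K}(d/c)$ strictly increases with $d$ while $\mathbf{K}'(d/c)=\mathbf{K}(\sqrt{1-d^2/c^2})$ strictly decreases, so $\tau^*$ is strictly decreasing on $(0,c)$; this is (2). The classical limits $\mathbf{K}(0)=\pi/2$ and $\mathbf{K}(k)\to+\infty$ as $k\to 1^-$, applied at the endpoints, give $\tau^*(d)\to 0$ as $d\to 0^+$ and $\tau^*(d)\to-\infty$ as $d\to c^-$, which is (3). Continuity of $\tau^*$ together with strict monotonicity and this range then produces the required continuous inverse $h\colon(-\infty,0)\to(0,c)$.

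The only delicate point is pinning down the sign of $\w_+$ on the cut: a sign error there flips $\tau(d)$ to positive and invalidates all three claims, so I would verify it by continuation of the branch chosen by $\w(0,d)=cd$ around $\i d$ along a small arc that does not cross the cut, as sketched above. Once the branch is fixed, the rest is a routine appeal to the monotonicity and endpoint values of the complete elliptic integral.
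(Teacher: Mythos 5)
Your proposal is correct and follows essentially the same route as the paper: both reduce $\tau^*(d)$ to $-4\pi\,\mathrm{I}_1(d)/\mathrm{I}_0(d)$, where $\mathrm{I}_1$ and $\mathrm{I}_0$ are the real integrals over $(0,d)$ and $(d,c)$ obtained by setting $k=\i\lambda$, and then deduce all three items from the numerator being increasing and the denominator decreasing in $d$. The only difference is cosmetic: the paper establishes these monotonicities directly via the substitutions $y=sd$ and $y=d+(c-d)s$, whereas you identify the two integrals as $\tfrac{1}{c}\mathbf{K}(d/c)$ and $\tfrac{1}{c}\mathbf{K}'(d/c)$ and invoke the classical monotonicity and endpoint behavior of the complete elliptic integral.
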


\begin{proof}
Indeed,
\[\tau^*(d)=4\pi\i\(\ds\int\limits_0^{\i d}\dsfrac{\d k}{\w(k,d)}\)
\(\ds\int\limits_{\i d}^{\i c}\dsfrac{\d k}{\w_+(k,d)}\)^{-1}.\]
From the definition of the $\w(k,d)$ we see, that for $k\in(\i
d,\i c)$: $\w_+(k,d)=\i|\w_+(k,d)|$. We make change of variables
$k=\i y$ in the last integrals and get
\[\tau^*(d)=-4\pi\dsfrac{\mathrm{I}_1(d)}{\mathrm{I}_0(d)},\]
where \[\mathrm{I}_1(d)=\ds\int\limits_0^{d}\dsfrac{\d
y}{\sqrt{(c^2-y^2)(d^2-y^2)}},\] and
\[\mathrm{I}_0(d)=\ds\int\limits_{d}^{c}\dsfrac{\d
y}{\sqrt{(c^2-y^2)(y^2-d^2)}}.\] Let make change of variable
$y=sd$ in $\mathrm{I}_1(d).$ Then we see that $\mathrm{I}_1(d)$ is
an increasing function on $d$. Indeed,
\[\mathrm{I}_1(d)=\dsfrac{1}{c}\int\limits_0^1\dsfrac{\d s}{\sqrt{(1-s^2)\(1-\frac{d^2}{c^2}s^2\)}}.\]
Now we make change of variable $y=d+(c-d)s$ in the integral
$\mathrm{I}_0(d).$  Then we see that $\mathrm{I}_0(d)$ is
decreasing function on $d$. Indeed,
\[\mathrm{I}_0(d)=\int\limits_0^1\dsfrac{\d s}{\sqrt{(c(1+s)+d(1-s))(cs+d(2-s))}\sqrt{s(1-s)}}.\]
Then the statements of lemma are evident.
\end{proof}

\subsection{Proof of lemma \ref{lemdmu}.}{\label{dmuxi}}
\begin{proof}
To show that this system has a unique solution let us  define the
function
\begin{equation}\label{Fmud}\nonumber
F(\mu,d)=\int\limits_0^1\left(\mu^2-\lambda^2d^2\right)\sqrt{
 {\frac{1-\lambda^2}{c^2-\displaystyle{\lambda^2d^2}}}}\d\lambda.
\end{equation}
It is easy to see that there  exists a function $\mu=\mu(d)$ such
that $F(\mu(d),d)\equiv0$ and $0<\mu(d)<d$. Moreover, $\mu(d)$ is
strictly increasing in $d\in[0,c]$. Indeed, one can check that
$F(\mu,d)$ is strictly increasing in $\mu$ and is strictly
decreasing in $d$ as $0<\mu<d<c$. Now if $0<d_1<d_2<c$ then
$F(\mu(d_1),d_1)=0=F(\mu(d_2),d_2)<F(\mu(d_2),d_1)$, that is
$F(\mu(d_1),d_1)<F(\mu(d_2),d_1)$ and, hence, $\mu(d_1)<\mu(d_2)$.
Furthermore $\mu(d)$ is continuous function that follows from the
representation:
\begin{equation}\label{mud}
\mu^2(d)= \int\limits_0^1 \lambda^2d^2\sqrt
 {\frac{1-\lambda^2}{c^2-\lambda^2d^2}}d\lambda\Bigg/
\int\limits_0^1  \sqrt
 {\frac{1-\lambda^2}{c^2-\lambda^2d^2}}d\lambda,
\end{equation}
which is equivalent to equality (\ref{dmu1}).
 Equation (\ref{mud}) yields that $\mu(0)=0$ and
$\mu(c)=\displaystyle\frac{c}{\sqrt{3}}.$  Hence
$\mu^2(d)+\displaystyle\frac{d^2}{2}$ varies over the segment
$\left[0,\displaystyle\frac{5c^2}{6}\right]$ as $d$ varies over
the segment $[0,c]$. So for any
$\xi\in\left(-\displaystyle\frac{c^2}{2},\displaystyle\frac{c^2}{3}
\right)$ there exists a unique $d\in(0,c)$ such that (\ref{dmu1})
and (\ref{dmu2}) are fulfilled. Let us denote this function by
$$f:\left[\dsfrac{-c^2}{2},\dsfrac{c^2}{3}\right]\rightarrow[0,c],\qquad\qquad
f(\xi)=d.$$
Moreover, $f$ is increasing function. And therefore there exists an inverse function $f^{-1}:\left[\dsfrac{-c^2}{2},\dsfrac{c^2}{3}\right]\rightarrow[0,c]$.\\
We can now rewrite equality (\ref{dmu2}) as follows:
\[\dsfrac{c^2}{2}+\xi=\mu^2(f(\xi))+\dsfrac{f^2(\xi)}{2}.\]
The last equality and the fact that $\mu(.)$ is increasing imply
that $f(.)$ is continuous function. So, the system (\ref{dmu1}),
(\ref{dmu2}) has a unique solution $\mu=\mu(f(\xi))$, $d=f(\xi)$,
and $\mu(f(\xi)),f(\xi)$ are continuous and strictly increasing
functions.
\end{proof}

\subsection{Asymptotic expansions of $\Delta$.}
In this section we don't always write dependence of some functions
on their argument to simplify the text.
\\
Let us transform expression for $\Delta(d)$ $(\ref{Delta})$:
\begin{equation}{\label{DeltaI2I1}}\Delta(d)=\dsfrac
{\int\limits_{\i d}^{\i c}\dsfrac{\log\left(a_+(k)a_-(k)\right)\d
k}{\w_+(k,d)} } {\i\int\limits_{0}^{\i d}\dsfrac{\d k}{\w(k,d)}}=
-\dsfrac {\int\limits_{\i d}^{\i
c}\dsfrac{\log\left(a_+(k)a_-(k)\right)\d k}{\w_+(k,d)} }
{\int\limits_{0}^{d}\dsfrac{\d y}{\w(\i
y,d)}}=:-\dsfrac{\mathrm{I_2(d)}}{\quad\mathrm{I_1(d)}}
\end{equation}

\subsubsection{Expansion of $\mathrm{I_2}$ in $\eta$.}

First we consider $\mathrm{I}_2(d)$. As for $k\in(\i d,\i c)$
$\w_+(k,d)=\i|\w(k,d)|$, then
\begin{eqnarray}\nonumber
&\mathrm{I}_2(d)=\left\vert k=\i y\right\vert=
\int\limits_d^c\dsfrac{\log\left(a_+(\i y)a_-(\i y)\right)\d
y}{\sqrt{(c^2-y^2)(y^2-d^2)}}= |y=d+(c-d)s|=\\\nonumber &=
(c-d)\int\limits_0^1\dsfrac{\log\left(a_+(\i(d+(c-d)s))a_-(\i(d+(c-d)s))\right)\d
s} {\sqrt{(c-d)(1-s)(c+d+(c-d)s)(c-d)s(2d+(c-d)s)}}=
\\\nonumber&=
\int\limits_0^1\dsfrac{\log\left(a_+(\i(d+(c-d)s))a_-(\i(d+(c-d)s))\right)\d
s} {\sqrt{s(1-s)}\sqrt{(c+d+(c-d)s)(2d+(c-d)s)}},
\end{eqnarray}
and
\begin{eqnarray}
\nonumber&\I_2(d(\eta))=\int\limits_0^1\dsfrac{\log\left(a_+(\i(c(1-\eta)+c\eta
s))a_- (\i(c(1-\eta)+c\eta s))\right)\d s}
{\sqrt{s(1-s)}\sqrt{(c(2-\eta)+c\eta s)(2c(1-\eta)+c\eta s)}}=
\\\nonumber&=\dsfrac{1}{c}\int\limits_0^1\dsfrac{\log\left(a_+(\i(c(1-\eta)+c\eta s))a_-
(\i(c(1-\eta)+c\eta s))\right)\d s}
{\sqrt{s(1-s)}\sqrt{(2-\eta+\eta s)(2-2\eta+\eta s)}}.
\end{eqnarray}
Now consider $\log a_+a_-$. Due to (\ref{asingularity}) we have
that for $k\in(\i c,\i d)$

\begin{eqnarray}\nonumber
&a_-(k)=\dsfrac{h_1}{2}\e^{-\pi\i/4}\sqrt[4]{\left|\dsfrac{2\i
c}{k-\i
c}\right|}\(1+\mathrm{O}\(\sqrt{\left|k-\i c\right|}\)\),\\
\nonumber&a_+(k)=\dsfrac{h_1}{2}\e^{+\pi\i/4}\sqrt[4]{\left|\dsfrac{2\i
c}{k-\i c}\right|}\(1+\mathrm{O}\(\sqrt{\left|k-\i c\right|}\)\)
,\\
\nonumber&\hskip-2cm a_-(k)a_+(k)=\dsfrac{h_1^2}{4}\sqrt{\left|\dsfrac{2\i
c}{k-\i c}\right|}\(1+\mathrm{O}\(\sqrt{\left|k-\i c\right|}\)\)
=|k=\i y|= \dsfrac{h_1^2}{4}\sqrt{\dsfrac{2
c}{c-y}}\(1+\mathrm{O}\(\sqrt{c-y}\)\)=\\
&\nonumber\hskip-2cm= |y=d+(c-d)s|= \dsfrac{h_1^2}{4}\sqrt{\dsfrac{2
c}{(c-d)(1-s)}}\(1+\mathrm{O}\(\sqrt{(c-d)(1-s)}\)\)
=|d=c(1-\eta)|= \\&\nonumber
\hskip-2cm=\dsfrac{h_1^2}{4}\sqrt{\dsfrac{2}{\eta(1-s)}}\(1+\mathrm{O}\(\sqrt{\eta(1-s)}\)\)
=\dsfrac{h_1^2}{2\sqrt{2}\
\eta(1-s)}\(1+\mathrm{O}\(\sqrt{\eta(1-s)}\)\).
\end{eqnarray}
Then
\begin{eqnarray}\nonumber
&\log\left(a_-(k)a_+(k)\right)=
-\dsfrac{1}{2}\log\eta-\dsfrac{1}{2}\log(1-s)+\log\dsfrac{h_1^2}{2\sqrt{2}}+\mathrm{O}\(\eta(1-s)\)=\\&=
-\dsfrac{1}{2}\log\eta-\dsfrac{1}{2}\log(1-s)+\log\dsfrac{h_1^2}{2\sqrt{2}}+\mathrm{O}\(\eta\).
\end{eqnarray}
Then the integral I$_2(d(\eta))$ is equal to
\begin{eqnarray}\nonumber
&\hskip-2cm\mathrm{I}_2(d(\eta))=\int\limits_0^1\dsfrac{-\dsfrac{1}{2}\log\eta-
\dsfrac{1}{2}\log(1-s)+\log\dsfrac{h_1^2}{2\sqrt{2}}+\mathrm{O}\(\eta\)} {c\sqrt{s(1-s)}\sqrt{(2-\eta+\eta
s)(2-2\eta+\eta s)}}\d s=
\\&\hskip-2cm=\nonumber
\int\limits_0^1\dsfrac{-\dsfrac{1}{2c}\log\eta\d s}
{\sqrt{s(1-s)}\sqrt{(2-\eta+\eta s)(2-2\eta+\eta s)}}+
\int\limits_0^1\dsfrac{-\dsfrac{1}{2c}\log(1-s)\d s}
{\sqrt{s(1-s)}\sqrt{(2-\eta+\eta s)(2-2\eta+\eta s)}}+
\\\nonumber&\hskip-2cm+
\int\limits_0^1\dsfrac{\log\dsfrac{h_1^2}{2\sqrt{2}}+\mathrm{O}\(\eta\)}
{c\sqrt{s(1-s)}\sqrt{(2-\eta+\eta s)(2-2\eta+\eta s)}}\d s=
\nonumber
\nonumber\\\nonumber&\hskip-2cm=-\dsfrac{1}{2c}\log\eta\int\limits_0^1\dsfrac{\left(\dsfrac{1}{2}
+\ord\eta\right)\right)\d s} {\sqrt{s(1-s)}}+
\\\nonumber&\hskip-2cm+
\int\limits_0^1\dsfrac{-\dsfrac{1}{2c}\log(1-s)\left(
\dsfrac{1}{2}+ \ord\eta\right) \right)\d s} {\sqrt{s(1-s)}}+
\\\nonumber&\hskip-2cm+
\int\limits_0^1\dsfrac{\dsfrac{1}{2}\log\dsfrac{h_1^2}{2\sqrt{2}}+\mathrm{O}\(\eta\)}
{c\sqrt{s(1-s)}}\d s=
\end{eqnarray}

\begin{eqnarray}\nonumber
\nonumber&\hskip-2cm=-\dsfrac{1}{2c}\log\eta\left(\dsfrac{\pi}{2}+\ord\eta\right)\right)
-\dsfrac{1}{2c}\left(-\pi\log2+\ord\eta\right)\right)+
\left(\dsfrac{\pi}{2c}\log\dsfrac{h_1^2}{2\sqrt{2}}+\mathrm{O}\(\eta\)\right)=
\end{eqnarray}
\begin{eqnarray}\nonumber
\hskip-2cm=-\dsfrac{\pi}{4c}\log\eta+\dsfrac{\pi}{2c}\log{\dsfrac{h_1^2}{\sqrt{2}}}+\ord\eta\right).
\end{eqnarray}
So,
\begin{equation}{\label{I2}}
\mathrm{I_2(d(\eta))}=-\dsfrac{\pi}{4c}\log\eta+\dsfrac{\pi}{2c}\log{\dsfrac{h_1^2}{\sqrt{2}}}+\ord\eta\right)=
\dsfrac{\pi}{4c}\log\dsfrac{h_1^4}{2\eta}+\ord\eta\right).
\end{equation}
The expansion of the I$_1$ is more difficult than expansion of the I$_2$ and is based on a $\Theta$-function identity.
\subsubsection{Expansion of $\tau^*(d(\eta))$ in $\eta$.}
We know (by using Poisson summation formula), that
\[\Theta(z|\tau)=\Theta\left(\dsfrac{2\pi\i z}{\tau}|\dsfrac{4\pi^2}{\tau}\right)
\sqrt{\dsfrac{2\pi}{-\tau}}\left(\exp\dsfrac{-z^2}{2\tau}\right) =
\Theta\left(\dsfrac{\tau^*z}{2\pi\i}|\tau^*\right)
\sqrt{\dsfrac{-\tau^*}{2\pi}}\left(\exp\dsfrac{-z^2\tau^*}{8\pi^2}\right),
\]
where $\tau^*=\dsfrac{4\pi^2}{\tau}$ and from \cite{KM} we know
that\[\dsfrac{\Theta(0|\tau(d))}{\Theta(\pi\i|\tau(d))}=\sqrt{\dsfrac{c+d}{c-d}}\]
(see formula (4.34) in $\cite{KM}$).\\
Let us recall that $\eta(d)=1-\dsfrac{d}{c}$. Then, by using the
Poisson summation formula (\ref{PoissonSum}), we get
\[
\sqrt{\dsfrac{2-\eta(d)}{\eta(d)}}=
\sqrt{\dsfrac{c+d}{c-d}}=\dsfrac{\Theta(0|\tau(d))}{\Theta(\pi\i|\tau(d))}
=\dsfrac{\Theta\left(0|\tau^*(d)\right)}
{\Theta\left(\dsfrac{\tau^*(d)}{2}|\tau^*(d)\right)\exp\dsfrac{\tau^*(d)}{8}}.
\]
Now we use the inverse function $h(.)$ of the function $\tau^*$
(see lemma \ref{h}) and rewrite the last formula.
\[
\sqrt{\dsfrac{2-\eta(h(\tau^*))}{\eta(h(\tau^*))}}=
\dsfrac{\Theta\left(0|\tau^*\right)}
{\Theta\left(\dsfrac{\tau^*}{2}|\tau^*\right)\exp\dsfrac{\tau^*}{8}}.
\]
 Since
\[\Theta(z|\tau^*)=\sum\limits_{m=-\infty}^{\infty}
\exp\left\{\dsfrac{1}{2}\tau^*m^2+zm\right\},
\]
then  $\eta(h(\tau^*))=$
\begin{eqnarray}\nonumber
&\hskip-2cm=2\left( \dsfrac{\Theta^2\left(0|\tau^*\right)}
{\Theta^2\left(\dsfrac{\tau^*}{2}|\tau^*\right)\exp\dsfrac{\tau^*}{4}}
+1\right)^{-1}= 2\left( \dsfrac{\left(
\sum\limits_{m=-\infty}^{\infty}\exp\left\{\dsfrac{1}{2}\tau^*m^2\right\}\right)^2}
{\left(
\sum\limits_{m=-\infty}^{\infty}\exp\left\{\dsfrac{1}{2}\tau^*m(m+1)\right\}\right)^2
\exp\dsfrac{\tau^*}{4} } +1\right)^{-1}=\\\nonumber
\\\nonumber&\hskip-2cm=
\left|\zeta=\e^{\dsfrac{\tau^*}{4}}\right|= 2\left( \dsfrac{\left(
1+\ord \e^{\frac{\tau^*}{2}}\right)\right)^2}
{4e^{\frac{\tau^*}{4}}\left( 1+\ord \e^{\tau^*}\right)\right)^2 }
+1\right)^{-1}= 8\e^{\frac{\tau^*}{4}}+\ord
e^{\frac{\tau^*}{2}}\right),\\\nonumber&\hskip-2.5cm\textrm{Now we get }
\\\nonumber&\eta=8\e^{\frac{\tau^*}{4}}+\ord \e^{\frac{\tau^*}{2}}\right),
\\\nonumber&\e^{\frac{\tau^*}{4}}=\dsfrac{\eta}{8}+\ord\eta^2\right),
\\{\label{tau^*}}&\tau^*=\tau^*(d(\eta))=-4\log\dsfrac{8}{\eta}+\ord\eta\right).
\end{eqnarray}
Also we have
\begin{equation}{\label{c2-d2}}\sqrt{c^2-h^2(\tau^*)}\e^{\frac{-\tau^*}{8}}=\sqrt{c^2-c^2(1-\eta(h(\tau^*)))^2}
\e^{\frac{-\tau^*}{8}}=4c\(1+\mathrm{O}\(\e^{\frac{\tau^*}{4}}\)\).
\end{equation}

\subsubsection{Expansion of $\mathrm{I_1}$ in $\eta$.}
In the other way, (see the proof of lemma $\ref{h}$),
\[\tau^*(d)=-4\pi\dsfrac{\mathrm{I}_1(d)}{\mathrm{I}_0(d)},\] and
then
\begin{equation}{\label{I1}}\mathrm{I}_1(d)=\dsfrac{-\tau^*(d)}{4\pi}\mathrm{I}_0(d).
\end{equation}
Now we get the asymptotic expansion of the $\mathrm{I}_0$.
\begin{eqnarray}\nonumber
&\mathrm{I}_0(d)=\int\limits_{d}^{c}\dsfrac{\d
y}{\sqrt{(c^2-y^2)(y^2-d^2)}}= |y=d+(c-d)s|=
\\\nonumber&=(c-d)\int\limits_0^1\dsfrac{\d s}{\sqrt{(c-d)(1-s)(c+d+(c-d)s)(c-d)s(2d+(c-d)s)}}=
\\&=\int\limits_0^1\dsfrac{\d s}{\sqrt{s(1-s)}\sqrt{(c+d+(c-d)s)(2d+(c-d)s)}}.\nonumber
\end{eqnarray}
Let us make change of variables $d=d(\eta)=c(1-\eta)$ in the last
integral. Then
\begin{eqnarray}
\nonumber&\mathrm{I}_0(d(\eta))=\int\limits_0^1\dsfrac{\d
s}{\sqrt{s(1-s)}\sqrt{(c(2-\eta)+c\eta s)(2c(1-\eta)+c\eta s)}}=
\\\nonumber&=\dsfrac{1}{c}\int\limits_0^1\dsfrac{\d s}{\sqrt{s(1-s)}\sqrt{(2-\eta+\eta s)(2-2\eta+\eta s)}}=
\dsfrac{1}{c}\int\limits_0^1\dsfrac{\left(\dsfrac{1}{2}+\ord\eta\right)\right)\d
s}{\sqrt{s(1-s)}}=
\\\nonumber&=\dsfrac{\pi}{2c}+\ord\eta\right).
\end{eqnarray}
And then by virtue of (\ref{I1}) and ($\ref{tau^*}$) we conclude
that
\begin{equation}{\label{I1ineta}}\mathrm{I}_1(d(\eta))=\left(\dsfrac{1}{2c}+\ord\eta\right)\right)
\left(\log\dsfrac{8}{\eta}+\ord\eta\right)\right)
=\dsfrac{1}{2c}\log\dsfrac{8}{\eta}+\ord\eta\log\eta\right).
\end{equation}

\textbf{Remark.} Although in ($\ref{I1ineta}$) we get only the first member of expansion of
\\$\mathrm{I_1}(kc)=\dsfrac{1}{c}\int\limits_0^1\dsfrac{\d x}{\sqrt{(1-k^2x^2)(1-x^2)}}$ as $k\rightarrow1$
(here $k=\dsfrac{d}{c}$), but in this way we can get as much as desired members of this expansion. See also $\cite{PS}$, problem 90.
\\
Finally, by virtue of ($\ref{DeltaI2I1}$), ($\ref{I1ineta}$) and ($\ref{I2}$), we obtain\\ {\it the expansion of $\Delta(d(\eta))$}:\\
\begin{eqnarray}\nonumber
&\Delta(d(\eta))=
\dsfrac{\dsfrac{\pi}{4c}\log\frac{2\eta}{h_1^4}+\ord\eta\right)}
{\dsfrac{1}{2c}\log\dsfrac{8}{\eta}+ \ord\eta\log\eta\right)}=
\dsfrac{-\pi}{2}\dsfrac{\log\dsfrac{1}{\eta}-\log\frac{2}{h_1^4}+\ord\eta\right)}
{\log\dsfrac{1}{\eta}+
3\log2+\ord\eta\log\eta\right)}=\\\\&\nonumber=
\dsfrac{-\pi}{2}\left(1-\dsfrac{4\log\dsfrac{2}{|h_1|}}{\log\dsfrac{1}{\eta}}+
\ord\dsfrac{1}{\log^2\eta}\right)\right).
\end{eqnarray}
and
\begin{equation}{\label{Deltaineta}}
\dsfrac{1}{\pi}\Delta(d(\eta))=
\dsfrac{-1}{2}\left(1-\dsfrac{4\log\dsfrac{2}{|h_1|}}{\log\dsfrac{1}{\eta}}+
\ord\dsfrac{1}{\log^2\eta}\right)\right).
\end{equation}
\subsection{Asymptotic expansions of $\mu$.}
In the following three paragraphes we get the expansion of $\mu$.
As we know from ($\ref{mud}$),
\begin{equation}\nonumber
\mu^2(d)=\dsfrac{\int\limits_0^d\dsfrac{y^2\sqrt{d^2-y^2}}{\sqrt{c^2-y^2}}\d
y} {\int\limits_0^d\dsfrac{\sqrt{d^2-y^2}}{\sqrt{c^2-y^2}}\d y}.
\end{equation}
And then
\begin{equation}{\label{muinI3I4}}
\hskip-2cm \mu^2(d)=
\dsfrac{d^2\int\limits_0^d\dsfrac{\sqrt{d^2-y^2}}{\sqrt{c^2-y^2}}\d
y} {\int\limits_0^d\dsfrac{\sqrt{d^2-y^2}}{\sqrt{c^2-y^2}}\d y}-
\dsfrac{\int\limits_0^d\dsfrac{(d^2-y^2)^{3/2}}{\sqrt{c^2-y^2}}\d
y} {\int\limits_0^d\dsfrac{\sqrt{d^2-y^2}}{\sqrt{c^2-y^2}}\d y}=
d^2-\dsfrac{\int\limits_0^d\dsfrac{(d^2-y^2)^{3/2}}{\sqrt{c^2-y^2}}\d
y} {\int\limits_0^d\dsfrac{\sqrt{d^2-y^2}}{\sqrt{c^2-y^2}}\d y}=:
d^2-\dsfrac{\I_4(d)}{\I_3(d)}.
\end{equation}

\subsubsection{Expansion of $\mathrm{I_3}$ in $\eta$.} Let us first consider $\I_3$:
\begin{equation}\nonumber
\I_3(d)=\int\limits_0^d\dsfrac{\sqrt{d^2-y^2}}{\sqrt{c^2-y^2}}\d y
\end{equation}
\\or\\
\begin{equation}{\label{I3}}
\I_3(d(\eta))=
\int\limits_0^{c(1-\eta)}\dsfrac{\sqrt{c^2(1-\eta)^2-y^2}}{\sqrt{c^2-y^2}}\d
y,
\end{equation}
where $d(\eta)=c(1-\eta)$.\\
Let us differentiate $I_3(d(.))$ in $\eta$:
\begin{eqnarray}\nonumber
&&\hskip-2cm(\I_3\circ d)^{'}_{\eta}(\eta)=
\int\limits_0^{c(1-\eta)}\dsfrac{-c^2(1-\eta)}
{\sqrt{c^2(1-\eta)^2-y^2}\sqrt{c^2-y^2}}\d y=|\textrm{see
}(\ref{DeltaI2I1})|=
-c^2(1-\eta)\I_1(d(\eta))=\\&&\nonumber\hskip-2cm=|\mathrm{see}
(\ref{I1ineta})|=-c^2(1-\eta)\left(\dsfrac{1}{2c}\log\dsfrac{8}{\eta}+
\ord\eta\log\eta\right)\right)
=\dsfrac{-c}{2}\log\dsfrac{8}{\eta}+\ord\eta\log\eta\right).
\end{eqnarray}
Then
\begin{eqnarray}{\label{I3ineta}} \I_3(d(\eta))&=\I_3(d(0))+\int\limits_0^{\eta}(\I_3\circ d)_{\widetilde{\eta}}^{'}(\widetilde{\eta})\d\widetilde{\eta}\nonumber\\
&=c-\dsfrac{c}{2}\int\limits_0^{\eta}\left(\log\dsfrac{8}{\widetilde{\eta}}+
\ord\widetilde{\eta}\log\widetilde{\eta}\right)\right)\d\widetilde{\eta}=
c-\dsfrac{c}{2}\eta\log\dsfrac{8\e}{\eta}+\ord\eta^2\log\eta\right).
\end{eqnarray}

\subsubsection{Expansion of $\mathrm{I_4}$ in $\eta$.} Now let us consider $\I_4$.
\begin{equation}\nonumber
\I_4(d)=\int\limits_0^d\dsfrac{(d^2-y^2)^{3/2}}{\sqrt{c^2-y^2}}\d
y
\end{equation}
and
\begin{equation}\nonumber
\I_4(d(\eta))=
\int\limits_0^{c(1-\eta)}\dsfrac{(c^2(1-\eta)^2-y^2)^{3/2}}{\sqrt{c^2-y^2}}\d
y
\end{equation}
where $d(\eta)=c(1-\eta)$.\\
Let us differentiate I$_4(d(.))$ in $\eta$:
\begin{eqnarray}\nonumber
&\hskip-2cm\dsfrac{\d\mathrm{I_4\circ}d}{\d\eta}(\eta)=
\int\limits_0^{c(1-\eta)}\dsfrac{-3c^2(1-\eta)\sqrt{c^2(1-\eta)^2-y^2}}
{\sqrt{c^2-y^2}}\d y=|\mathrm{see} (\ref{I3})|=
-3c^2(1-\eta)\I_3(d(\eta))=\\&\nonumber\hskip-2cm=|\mathrm{see}
(\ref{I3ineta})|=-3c^3(1-\eta)\left(1-\dsfrac{1}{2}\eta\log\dsfrac{8\e}{\eta}+\ord\eta^2\log\eta\right)\right)
=-3c^3+\ord\eta\log\eta\right).
\end{eqnarray}
Then
\begin{align}{\label{I4ineta}}\I_4(d(\eta))&=\I_4(d(0))+\int\limits_0^{\eta}(\I\circ d)^{'}_{\widetilde{\eta}}(\widetilde{\eta})\d\widetilde{\eta}\nonumber\\
&=\dsfrac{2}{3}c^3-3c^3\int\limits_0^{\eta}\left(1+
\ord\widetilde{\eta}\log\widetilde{\eta}\right)\right)\d\widetilde{\eta}
=\dsfrac{2}{3}c^3-3c^3\eta+\ord\eta^2\log\eta\right).
\end{align}

\subsubsection{Expansion of $\mu$ in $\eta$.} Finally, by virtue of ($\ref{muinI3I4}$) and the fact that $d=d(\eta)=c(1-\eta)$, we get that
\begin{equation}\nonumber
\mu^2(d)=d^2-\dsfrac{\I_4(d)}{\I_3(d)},
\end{equation}
and
\begin{eqnarray}\nonumber
&\mu^2(d(\eta))=c^2(1-\eta)^2-\dsfrac{\dsfrac{2}{3}c^3-3c^3\eta+\ord\eta^2\log\eta\right)}
{c-\dsfrac{c}{2}\eta\log\dsfrac{8\e}{\eta}+\ord\eta^2\log\eta\right)}=
\\\nonumber&=c^2(1-\eta)^2-\dsfrac{2}{3}c^2\dsfrac{1-\dsfrac{9}{2}\eta+\ord\eta^2\log\eta\right)}
{1-\dsfrac{1}{2}\eta\log\dsfrac{8\e}{\eta}+\ord\eta^2\log\eta\right)}=\\
\nonumber&=c^2(1-\eta)^2-\dsfrac{2}{3}c^2
\left(1-\dsfrac{9}{2}\eta
+\dsfrac{1}{2}\eta\log\dsfrac{8\e}{\eta}+\ord\eta^2\log^2\eta\right)\right)
=\\\nonumber&=c^2(1-\eta)^2+c^2 \left(-\dsfrac{2}{3}+3\eta
-\dsfrac{1}{3}\eta\log\dsfrac{8\e}{\eta}+\ord\eta^2\log^2\eta\right)\right)=
\\\nonumber&=c^2\left(\dsfrac{1}{3}+\eta
-\dsfrac{1}{3}\eta\log\dsfrac{8\e}{\eta}+\ord\eta^2\log^2\eta\right)\right)
=c^2\left(\dsfrac{1}{3}
-\dsfrac{1}{3}\eta\log\dsfrac{8}{\eta\e^2}+\ord\eta^2\log^2\eta\right)\right).
\end{eqnarray}
So,
\begin{equation}{\label{muineta}}
\mu^2(d(\eta))=c^2\left(\dsfrac{1}{3}
-\dsfrac{1}{3}\eta\log\dsfrac{8}{\eta\e^2}+\ord\eta^2\log^2\eta\right)\right)
\end{equation}
and
\begin{equation}{\label{3muinetac2}}\nonumber
\dsfrac{3\mu^2(d(\eta))}{c^2}=\left(1
-\eta\log\dsfrac{8}{\eta\e^2}+\ord\eta^2\log^2\eta\right)\right).
\end{equation}
\subsection{Expansion of $v$ in $\eta$.}
Here we introduce one more small parameter $v$, which is more
close to $\xi$ than $\eta$. As we know from paragraph
($\ref{dmuxi}$),
\[f^{-1}(d)=\mu^2(d)+\dsfrac{d^2}{2}-\dsfrac{c^2}{2}.\]
Let us define new variable\[v=v(\xi)=1-\dsfrac{3\xi}{c^2},\]and remember the definition of the function $d=f(\xi)$ from paragraph $(\ref{dmuxi})$.\\
Then
\[
v(f^{-1}(d))=1-\dsfrac{3}{c^2}\left(\mu^2(d)+\dsfrac{d^2}{2}-
\dsfrac{c^2}{2}\right) \] and
\begin{eqnarray}\nonumber
&\hskip-2cm v(f^{-1}(d(\eta)))=1-\dsfrac{3}{c^2}\left(\mu^2(d(\eta))+\dsfrac{d^2(\eta)}{2}-
\dsfrac{c^2}{2}\right)=1-\left(1
-\eta\log\dsfrac{8}{\eta\e^2}+\ord\eta^2\log^2\eta\right)\right)-
\\\nonumber&\hskip-2cm -\dsfrac{3}{c^2}\left(
\dsfrac{d^2(\eta)}{2}-\dsfrac{c^2}{2}\right)=
\eta\log\dsfrac{8}{\eta\e^2}+\dsfrac{3}{2}
\left(1-\dsfrac{d^2(\eta)}{c^2}\right)+\ord\eta^2\log^2\eta\right)=\\
\nonumber&\hskip-2cm =\eta\log\dsfrac{8}{\eta\e^2}+\dsfrac{3}{2}\eta(2-\eta)+\ord\eta^2\log^2\eta\right)=
\\\nonumber&\hskip-2cm =\eta\log\dsfrac{8\e}{\eta}+\ord\eta^2\log^2\eta\right),
\end{eqnarray}
or
\begin{equation}{\label{vineta}}\dsfrac{v}{8\e}=\dsfrac{\eta}{8\e}\log\dsfrac{8\e}{\eta}
+\ord\eta^2\log^2\eta\right).\end{equation}
\subsection{Expansion of $\eta$ in $v$.}
Let us note that if $x=x(y)$ is invertible function in some
neighborhood of zero and $$x(y)=y+\ord y\right),y\rightarrow0,$$
then $$y(x)=x+\ord x\right),x\rightarrow0.$$ We have
$x=\dsfrac{v}{8\e}$, $y=\dsfrac{\eta}{8\e}\log\dsfrac{8\e}{\eta}$.
Then
\[\dsfrac{\eta}{8\e}\log\dsfrac{\eta}{8\e}=\dsfrac{-v}{8\e}+\ord v^2\right),\quad v\rightarrow0.\]
We have got a Lambert equation
\[w\e^w=z,\quad w<0,z<0,\]
where \begin{equation}{\label{wzineta}}w=\log\dsfrac{\eta}{8\e},\quad z=\dsfrac{-v}{8\e}+\ord v^2\right).\end{equation}
  Following \cite{C&K}, we get that this equation has the solution for $w<<0$, and this inverse has the following expansion:
\[w=-L_1-L_2-\dsfrac{L_2}{L_1}+\ord\dsfrac{L_2^2}{L_1^2}\right),\quad z\rightarrow-0,\]
where
\begin{eqnarray}\nonumber
&L_1=\log\dsfrac{-1}{z},
\\\nonumber&L_2=\log\log\dsfrac{-1}{z}.\end{eqnarray}
Then
\begin{eqnarray}\nonumber
&\e^w=e^{-L_1}\e^{-L_2}\exp\left(-\frac{L_2}{L_1}\right)\exp\left(\ord\dsfrac{L_2^2}{L_1^2}\right)\right),\quad
z\rightarrow-0,
\\\nonumber&\e^w=\dsfrac{-z}{\log{\dsfrac{-1}{z}}}\left(1-\frac{\log\log\dsfrac{-1}{z}}
{\log\dsfrac{-1}{z}}+\ord\dsfrac{\log^2\log\dsfrac{-1}{z}}{\log^2\dsfrac{-1}{z}}\right)\right),\quad
z\rightarrow-0.
\end{eqnarray}
And by virtue of ($\ref{wzineta}$):
\begin{eqnarray}\nonumber
\\&\hskip-2cm\dsfrac{\eta}{8\e}=\dsfrac{\dsfrac{v}{8\e}+\ord v^2\right)}{\log{\dsfrac{1}{\dsfrac{v}{8\e}+\ord v^2\right)}}}\left(1-\frac{\log\log\dsfrac{1}{\dsfrac{v}{8\e}+\ord v^2\right)}}
{\log\dsfrac{1}{\dsfrac{v}{8\e}+\ord
v^2\right)}}+\ord\dsfrac{\log^2\log\dsfrac{1}{\dsfrac{v}{8\e}+\ord
v^2\right)}}{\log^2\dsfrac{1}{\dsfrac{v}{8\e}+\ord
v^2\right)}}\right)\right),\quad z\rightarrow-0,
\nonumber\end{eqnarray}
\begin{eqnarray}
\nonumber\\\nonumber&\eta=\dsfrac{v+\ord
v^2\right)}{\log{\dsfrac{8\e}{v+\ord v^2\right)}}}\left(1-\frac
{\log\log\dsfrac{8\e}{v+\ord v^2\right)}} {\log\dsfrac{8\e}{v+\ord
v^2\right)}}+
\ord\dsfrac{\log^2\log\dsfrac{1}{v}}{\log^2\dsfrac{1}{v}}\right)\right),\quad
v\rightarrow+0.
\end{eqnarray}
As
\begin{eqnarray}\nonumber
\\\nonumber&
\log\log\dsfrac{8\e}{v+\ord v^2\right)}=
\log\log\left(\dsfrac{8\e}{v}\left(1+\ord v\right)\right)\right)=
\log\left(\log\dsfrac{8\e}{v}+\ord v\right)\right)=\\
&=\log\left(\log\dsfrac{8\e}{v}\left(1+\ord\dsfrac{v}{\log
v}\right)\right)\right)=
\log\log\dsfrac{8\e}{v}+\ord\dsfrac{v}{\log v}\right),
\nonumber\end{eqnarray} then
\begin{eqnarray}\nonumber
&\eta=\dsfrac{v}{\log{\dsfrac{8\e}{v}}}\left(1-\frac
{\log\log\dsfrac{8\e}{v}} {\log\dsfrac{8\e}{v}}+
\ord\dsfrac{\log^2\log\dsfrac{1}{v}}{\log^2\dsfrac{1}{v}}\right)\right)
=\\\nonumber&=\dsfrac{v}{\log\dsfrac{1}{v}\left(1+\dsfrac{\log8\e}{\log\dsfrac{1}{v}}\right)}\left(1-\frac
{\log\left(\log\dsfrac{1}{v}\left(1+\dsfrac{\log8\e}{\log\dsfrac{1}{v}}\right)\right)}
{\log\dsfrac{1}{v}\left(1+\dsfrac{\log8\e}{\log\dsfrac{1}{v}}\right)}+
\ord\dsfrac{\log^2\log\dsfrac{1}{v}}{\log^2\dsfrac{1}{v}}\right)\right)
\\\nonumber&
=\dsfrac{v}{\log\dsfrac{1}{v}}\left(1-\frac {\log
8\e+\log\log\dsfrac{1}{v}+\dsfrac{\log8\e}{\log\dsfrac{1}{v}}}
{\log\dsfrac{1}{v}}+
\ord\dsfrac{\log^2\log\dsfrac{1}{v}}{\log^2\dsfrac{1}{v}}\right)\right)
\\\nonumber&
=\dsfrac{v}{\log\dsfrac{1}{v}}\left(1-\frac {\log
8\e+\log\log\dsfrac{1}{v}} {\log\dsfrac{1}{v}}+
\ord\dsfrac{\log^2\log\dsfrac{1}{v}}{\log^2\dsfrac{1}{v}}\right)\right)
,\quad v\rightarrow+0,
\end{eqnarray}
and so we get that
\begin{equation}{\label{etainv}}
\eta=\dsfrac{v}{\log\dsfrac{1}{v}}\left(1-\frac {\log
8\e+\log\log\dsfrac{1}{v}} {\log\dsfrac{1}{v}}+
\ord\dsfrac{\log^2\log\dsfrac{1}{v}}{\log^2\dsfrac{1}{v}}\right)\right)
,\quad v\rightarrow+0.
\end{equation}

\subsection{Asymptotic expansion of $B$.}
As we know, (see $(\ref{Bg})$),
\begin{equation}\nonumber
B(d)=24\ds\int\limits_{\i d}^{\i c}\frac{(k^2+\mu^2(d))(k^2+d^2)\d
k}{\w(k,d)}.
\end{equation}
Now we get the asymptotic expansion of $B$ as $d$ tends to $c$.
\\Let us make change $k=\i y, y\in(d,c)$ in the last integral:
\begin{equation}\nonumber
B(d)=24\ds\int\limits_d^c\frac{(y^2-\mu^2(d))\sqrt{y^2-d^2}\d
y}{\sqrt{c^2-y^2}}= |y=d+(c-d)s|=
\end{equation}\begin{equation}=24(c-d)\ds\int\limits_0^1\frac{((d+(c-d)s)^2-\mu^2(d))\sqrt{(c-d)s(2d+(c-d)s)}\d
s} {\sqrt{(c-d)(1-s)(c+d+(c-d)s)}}=\nonumber
\end{equation}
\begin{equation}=24(c-d)\ds\int\limits_0^1\sqrt\frac{s}{1-s}\frac{((d+(c-d)s)^2-\mu^2(d))\sqrt{(2d+(c-d)s)}}
{\sqrt{(c+d+(c-d)s)}}\d s\nonumber
\end{equation}
Now we make change $d=d(\eta)=c(1-\eta)$ and recall (see
$\ref{muineta}$), that
\begin{equation}\nonumber\mu^2(d(\eta))=\dsfrac{c^2}{3}\left(1+\ord\eta\log\eta\right)\right).
\end{equation}
Then
\begin{eqnarray*}
\hskip-2cm B(d(\eta))&=&24c\eta\ds\int\limits_0^1\sqrt\frac{s}{1-s}\frac{\left(c^2(1-\eta+\eta
s)^2-\dsfrac{1}{3}c^2(1+\ord\eta\log\eta\right))\right)\sqrt{(2c(1-\eta)
+c\eta s)}} {\sqrt{(c(2-\eta)+c\eta s)}}\d
s=\end{eqnarray*}
\begin{eqnarray*}
\hskip-2cm =24c\eta\ds\int\limits_0^1\sqrt\frac{s}{1-s}
\left(\dsfrac{2}{3}c^2\left(1+\ord
\eta\log\eta\right)\right)\right) \left(1+\ord\eta\right)\right)\d
s=\\= 24c\eta\ds\int\limits_0^1\sqrt\frac{s}{1-s}
\left(\dsfrac{2}{3}c^2\left(1+\ord
\eta\log\eta\right)\right)\right)\d s=\end{eqnarray*}
\begin{equation}\label{Bgas}\hskip-2cm =
16c^3\eta\ds\int\limits_0^1\sqrt\frac{s}{1-s}\left(1+\ord\eta\log\eta\right)\right)\d
s==8\pi c^3\eta(1+\ord\eta\log\eta\right)),\quad\eta\rightarrow0.
\end{equation}

\section*{Acknowledgements}
The authors are pleased to acknowledge helpful discussions with E.Ya.Khruslov,
I.~E.~Egorova and D.~G.~Shepelsky relating to the contents of this paper, as well as comments and suggestions given by the Referees.

The research was supported in part by the Akhiezer
foundation and by scholarship of National Academy of Sciences of
Ukraine.

The part of the research was supported by the project "Support of
inter-sectoral mobility and quality enhancement of research teams
at Czech Technical University in Prague" CZ.1.07/2.3.00/30.0034
\bigskip


\section*{References}


\begin{thebibliography}{}

\bibitem{GP} Gurevich A V and  Pitaevskii L P 1973  Decay of Initial Discontinuity in the Korteweg-de Vries Equation
\emph{JETP Letters} \textbf{17/5} 193

\bibitem{BikN1} Bikbaev R F and Novokshenov V Yu 1988 The Korteveg-de Vries Equation with Finite Gap Boundary Conditions and Self-Similar Solutions of Whitham Equations \emph{ Proc. III International Workshop "Nonlinear and Turbulent Processes in Physics"} Kiev \textbf{1} 32-35

\bibitem{BikN2} Bikbaev R F and Novokshenov V Yu 1989 Existence and uniqueness of the solution of the Whitham equation (Russian) \emph{ Asymptotic methods for solving problems in mathematical physics  Akad. Nauk SSSR Ural. Otdel. Bashkir. Nauchn. Tsentr Ufa} 81-95

\bibitem{Bikb1} Bikbaev R F 1989 Structure of a shock wave in the theory of the Korteweg-de Vries equation.  \emph{Phys. Lett. A} \textbf{141/5-6} 289-293

\bibitem{Bikb2} Bikbaev R F and  Sharipov R A 1989 The asymptotic behavior, as $t\to\infty$, of the solution of the Cauchy problem for
the Korteweg-de Vries equation in a class of potentials with
finite-gap behavior as $x\to\pm\infty$.  \emph{ Teoret. Mat.
Fiz.}  \textbf{78/3} 345-356  translation in \emph{Theoret. and
Math. Phys.} \textbf{78/3} 244-252

\bibitem{Bikb3} Bikbaev R F 1989 The Korteweg-de Vries equation with finite-gap boundary conditions and Whitham deformations
of Riemann surfaces (Russian) \emph{ Funktsional. Anal. i Prilozhen.}
\textbf{23/4} 1-10  translation in \emph{Funct. Anal. Appl.}
  \textbf{23/4} 257-266 (1990)

\bibitem{Bikb4} Bikbaev R F 1992 The influence of viscosity on the structure of shock waves in the MKdV model (Russian)
\emph{Zap. Nauchn. Sem. S.-Peterburg. Otdel. Mat. Inst. Steklov. (POMI)  Voprosy Kvant. Teor. Polya Statist. Fiz.} \textbf{11} 37-42
184 translation in \emph{J. Math. Sci.}  \textbf{77}  (1995) \textbf{2} 3042-3045

\bibitem{Bikb5} Bikbaev R F 1995 Complex Whitham deformations in problems with "integrable instability" (Russian)
\emph{Teoret. Mat. Fiz.}  \textbf{104/3} 393-419  translation in
\emph{Theoret. and Math. Phys.} (1996) \textbf{104/3} 1078-1097

\bibitem{Bikb6} Bikbaev R F 1994  Modulational instability stabilization via complex Whitham deformations: nonlinear Schrodinger equation \emph{
Zap. Nauchn. Sem. S.-Peterburg. Otdel. Mat. Inst. Steklov. (POMI)}
\textbf{215}  \emph{Differentsialnaya Geom. Gruppy Li i Mekh. } \textbf{14}
65-76 310 translation in \emph{J. Math. Sci. (New York)}  \textbf{85}  (1997)
\textbf{1} 1596-1604

\bibitem{BK07} Boutet de Monvel A and Kotlyarov V P 2007 Focusing non\-linear Schrodinger equation on the quarter plane with time-periodic boundary condition: a Riemann-Hilbert approach \emph{J. Inst. Math. Jussieu} \textbf{6/4} 579-611

\bibitem{BIK07} Boutet de Monvel A, Its A R and Kotlyarov V P 2007 Long-time asymptotics for the focusing NLS equation with time-periodic
boundary condition. \emph{C. R. Math. Acad. Sci. Paris.} \textbf{
345/11} 615-620

\bibitem{BIK09}  Boutet de Monvel A, Its A R and Kotlyarov V P 2009  Long -time asymptotics for the focusing NLS equation with time -- periodic
boundary condition on the half line. \emph{Comm. Math. Phys.}  \textbf{290/2} 479-522

\bibitem{BKS11}  Boutet de Monvel A, Kotlyarov V P and Shepelsky D G 2011
 Focusing NLS equation: long-time dynamics of step-like initial
data.  \emph{International Mathematics Research Notices} \textbf{7} 1613-1653

\bibitem{BV} Buckingham R and Venakides S 2007 Long-time asymptotics of the non-linear Schrodinger equation shock problem. \emph{Comm. Pure Appl. Math.} \textbf{60/9} 1349-1414

\bibitem{C&K} Corless Robert M,  Gonnet G H, Hare D E G, Jeffrey D J and Knuth D E 1996 On the Lambert W Function \emph{ Advances in Computational Mathematics}
\textbf{5} 329-359

\bibitem{DZ93} Deift P and Zhou X 1993 A steepest descent method for oscillatory Riemann -- Hilbert problems. Asymptotics for the MKdV equation \emph{
 Annals of Mathematics} \textbf{137/2} 295-368

\bibitem{EGKT} Egorova I, Gladka Z, Kotlyarov V and Teschl G 2012  Long-Time Asymptotics for the Korteweg-de Vries Equation with Steplike Initial
Data. \emph{Nonlinearity } \textbf{26/7} 1839-1864

\bibitem{Kh1} Khruslov E Ya 1975 Splitting of an initial step-like perturbation for the KdV equation \emph{ Letters to JETP} \textbf{21/4} 469-472

\bibitem{Kh2} Khruslov E Ya 1976 Asymptotics of the solution of the Cauchy prob\-lem for the Korteweg de Vries equation with initial data of step type. \emph{Matem. Sbornik (New Series)} \textbf{99(141):2} 261-281

\bibitem{KK} Khruslov E Ya and Kotlyarov V P 1989 Asymptotic
solitons of the modified Korteweg-de Vries equation \emph{  Inverse
problems} \textbf{5/6} 1075-1088

\bibitem{KK2} Khruslov E Ya and Kotlyarov V P 1994 Soliton asymptotics of nondecreasing solutions of nonlinear completely integrable evolution equations \emph{    Spectral operator theory and
related topics} Adv. Soviet Math. \textbf{19} Amer. Math. Soc.
Providence, RI 129-180

\bibitem{KK3} Khruslov E Ya and  Kotlyarov V P 2003  Generation of asymptotic solitons in an integrable model of stimulated Raman scattering by periodic boundary data \emph{Mat. Fiz.Anal. Geom.}  \textbf{10/3} 366-384

\bibitem{KM} Kotlyarov Vladimir and Minakov Alexander 2010
Riemann-Hilbert problem to the modified Korteveg de Vries equation:
Long-time dynamics of the steplike initial data
\emph{Journal of Mathematical Physics} \textbf{51} 093506

\bibitem{Lavrentiev Sabat} Lavrent'ev M A and Sabat B V 1951 Metody teorii funkcii kompleksnogo peremennogo. (Russian) Methods of the theory of functions of a complex variable \emph{ Gosudarstv. Izdat. Tehn.-Teor. Lit. Moscow-Leningrad}

\bibitem{M1} Minakov A 2011 Long-time behavior of the solution to the mKdV
equation with step-like initial data \emph{ J. Phys. A: Math. Theor.} \textbf{44} 085206

\bibitem{M2} Minakov A 2011
Asymptotics of Rarefaction Wave Solution
to the mKdV Equation \emph{Journal of Mathematical Physics, Analysis, Geometry} \textbf{ 7/1} 59-86

\bibitem{KM2}  Kotlyarov V and Minakov A 2011
Step-Initial Function to the MKdV Equation:
Hyper-Elliptic Long-Time Asymptotics of the Solution \emph{ Journal of Mathematical Physics, Analysis, Geometry} \textbf{ 8/1} 37-61

\bibitem{MK} Moskovchenko E A and Kotlyarov V P 2006 A new Riemann -- Hilbert problem in a model of stimulated Raman scattering \emph{
J.Phys.A.: Math. Gen.} \textbf{39} 014591

\bibitem{Mos} Moskovchenko E A 2009 Simple periodic boundary data and Rie\-mann -- Hilbert problem for integrable model of the stimulated Raman scattering
    \emph{Journal of mathematical physics,analysis, geometry} \textbf{5/1} 82-103

\bibitem{MK10} Moskovchenko E A and Kotlyarov V P  Periodic boundary data for an integrable model of stimulated Raman scattering: long-time asymptotic behavior
    \emph{Journal of Physics A: Mathematical and Theoretical}\textbf{ 43/5} 055205

\bibitem{Novik} Novokshenov V Yu 2003 Time asymptotics for soliton equations in problems with step initial conditions (Russian) \emph{Sovrem. Mat. Prilozh., Asimptot. Metody Funkts. Anal.} \textbf{ 5} 138-168 translation in \emph{J. Math. Sci. (N. Y.)} \textbf{ 125} \textbf{5} 717-749  (2005)

\bibitem{Novokshenov80} Novoksenov V Ju 1980  Asymptotic behavior as $t\to\infty$ of the solution of the Cauchy problem for a nonlinear
Schrodinger equation (Russian) \emph{ Dokl. Akad. Nauk SSSR}\textbf{ 251/4} 799-802

\bibitem{Novokshenov82}Novokshenov V Yu 1982 Asymptotic Formulae for the Solutions of the System of Nonlinear Schrodinger
Equations \emph{ Uspekhi Matem. Nauk} \textbf{37/2} 215-216

\bibitem{Novokshenov85}Novokshenov V Yu 1982  Asymptotics as $t\to\infty$ of the Solution to a Two-Dimentional Generalisation of the Toda Lattice
\emph{Doklady AN SSSR} \textbf{265/6} 1320-1324 translation in \emph{Soviet Math. Dokl}  \textbf{26/1} 264-268 (1982)

\bibitem{PS} Polya George and Szego Gabor 1978 Problems and theorems in analysis. I. Series, integral calculus, theory of functions
Translated from the German by Dorothee Aeppli \emph{ Reprint of the 1978
English translation. Classics in Mathematics} Springer-Verlag Berlin xx+389 pp. ISBN: 3-540-63640-4



\end{thebibliography}
\end{document}